\newcommand{\Ima}{{\operatorname{Im}}}
\newtheorem{Theorem}{Theorem}
\newtheorem{corollary}[Theorem]{Corollary}
\newtheorem{remark}[Theorem]{Remark}
\newtheorem{lemma}[Theorem]{Lemma}
\newtheorem{definition}[Theorem]{Definition}
\numberwithin{equation}{section}
\newcommand\blfootnote[1]{%
  \begingroup
  \renewcommand\thefootnote{}\footnote{#1}%
  \addtocounter{footnote}{-1}%
  \endgroup
}
\renewcommand\bra[1]{{\langle{#1}|}}
\renewcommand\ket[1]{%
  \@ifnextchar\bra{\k@t{#1}\!}{\k@t{#1}}%
}
\newcommand\k@t[1]{{|{#1}\rangle}}
\begin{document}

\title{Schmidt rank constraints in Quantum Information Theory}


\author[Cariello]{D. Cariello} 

\blfootnote{\textit{Address.} Faculdade de Matem\'atica,  Universidade Federal de Uberl\^{a}ndia,  38.400-902, Uberl\^{a}ndia, Brazil.\newline\indent  \textit{Email address.} dcariello@ufu.br}


\keywords{Mutually Unbiased Bases, Schmidt Number, Entanglement, Operator Schmidt Rank  \and  PPT states}

\subjclass[2010]{15A69  \and  81P40}

\thanks{ORCID number: 0000-0001-5548-5453}

\begin{abstract}   Can vectors with low Schmidt rank form mutually unbiased bases?
Can vectors with high Schmidt rank form positive under partial transpose states?   In this work,  we address these questions by presenting  several new results related to Schmidt rank constraints and their compatibility with other properties.
We provide an upper bound on the number of mutually unbiased  bases of $\mathbb{C}^m\otimes\mathbb{C}^n$ $(m\leq n)$  formed by vectors with low  Schmidt rank. In particular, the number of mutually unbiased product bases of  $\mathbb{C}^m\otimes\mathbb{C}^n$ cannot exceed $m+1$, which solves a conjecture proposed by McNulty et al. Then  we show how to create a positive under partial transpose entangled  state from any state supported on the antisymmetric space and  how their Schmidt numbers are exactly related.  Finally, we show that the Schmidt number of operator Schmidt rank 3 states of $\mathcal{M}_m\otimes \mathcal{M}_n\ (m\leq n)$ that are invariant under left partial transpose  cannot exceed $m-2$.

\end{abstract}

\maketitle

\section{Introduction}

The Schmidt rank is a fundamental concept  in quantum theory due to its connection to entanglement, so it is  natural to wonder how  constraints on this  rank affect  results related to entanglement and other compatible properties.  In this work, we investigate three situations, relevant to quantum information theory, where  restrictions on this  rank are imposed.
The first   is in the context of mutually unbiased bases. 

Consider $s$ orthonormal bases of a $d-$dimensional  Hilbert space: 
$\{\ket{\Psi_{j1}},\ldots, \ket{\Psi_{jd}}\}$, $j=1,\ldots,s$.

They  are said to be mutually unbiased if   \begin{center}
$|\left\langle\Psi_{aj},\!\Psi_{bi}\right\rangle|= \frac{1}{\sqrt{d}}$
\end{center} 

for every $\{i,j\}\subset \{1,\ldots,d\}$, $\{a,b\}\subset \{1,\ldots,s\}$ and $a\neq b$.  

These bases have been used in state determination, quantum state tomography and cryptography  $($\cite{Ivanovic, wootters, wootters2, cerf, Bennett}$)$.
Determining the maximum number of mutually unbiased bases in an arbitrary dimension $d$ is one of the open problems. It is  known that this number cannot exceed $d+1$. In addition, when $d$ is a prime power, that maximum is exactly $d+1$ $($\cite{wootters, Bandyopadhyay, Calderbank, Weiner, Delsarte,  Asch}$)$.  

Here we address the following problem: \begin{quote}
How big is the number of mutually unbiased bases of $\mathbb{C}^m\otimes\mathbb{C}^n$ or  $\mathbb{R}^m\otimes\mathbb{R}^n$ formed by vectors with Schmidt rank less or equal to $k$, where $k<m\leq n$?
\end{quote}

We show that the number of such bases  cannot exceed
\vspace{0.2cm}

\begin{center}
$\dfrac{k(m^2-1)}{m-k}$ in $\mathbb{C}^m\otimes \mathbb{C}^n$\ \ and\ \ $ \dfrac{k}{2}\dfrac{(m(m+1)-2)}{(m-k)}$ in  $\mathbb{R}^m\otimes \mathbb{R}^n$.
\end{center}

  \vspace{0,2cm}

  Note that our upper bound equals $m+1$, when $k=1$, in $\mathbb{C}^m\otimes \mathbb{C}^n$.  Thus, the number of mutually unbiased product bases of  $\mathbb{C}^m\otimes\mathbb{C}^n$ cannot exceed $m+1$.  This  solves conjecture 1 proposed by McNulty et al. in \cite{Mc} (corollary \ref{corollaryconjecture}).

The connection of these results  with quantum information theory is that they can be interpreted as an upper bound on the number of complementary measurements in bases with little entanglement.

The second situation is in the context of entanglement quantification.  At this point, we are interested in the construction of positive under partial transpose states (PPT states) using vectors with high Schmidt rank. This can be accurately done using the notion of the Schmidt number \cite{Sperling2011, Terhal}.

  Given a positive semidefinite Hermitian matrix  $ \delta=\sum_{i=1}^nA_i\otimes B_i \in \mathcal{M}_k\otimes \mathcal{M}_m$, define  its Schmidt number   by \begin{center}
$\displaystyle SN(\delta)=\min\left\{\max_{ j}\left\{SR(\ket{w_j}) \right\},  \ \delta=\sum_{j=1}^m\ket{w_j}\!\bra{w_j} \right\}$
\end{center}
$($This minimum is taken over all decompositions of $\delta$ as $\sum_{j=1}^m\ket{w_j}\!\bra{w_j}$, where $\ket{w_j}\in\mathbb{C}^k\otimes \mathbb{C}^m$ for every $j$ and $SR(\ket{w_j})$ stands for the Schmidt rank of $\ket{w_j})$.

Recall that $\delta$  is separable if $SN(\delta)=1$ and entangled if  $SN(\delta)>1$ .

A large Schmidt number is associated to an idea of strong entanglement, but entangled PPT states are considered  a weaker form of entanglement. Discovering the best possible Schmidt number for PPT states  has become an important problem (\cite{Yang, Marcus, Chen, sanpera}).

An example of a PPT state with Schmidt number  half of its local dimension has been found recently in \cite[Proposition 2]{CarielloIneq}. This state is a mixture of the orthogonal projection on the symmetric space of $\mathbb{C}^m\otimes \mathbb{C}^m$, which we denote by $P_{sym}^{m,2}$, with a particular pure state.
Although it seems delicate, the construction is actually quite robust. 


Given any state $\gamma$ supported on the antisymmetric subspace of $\mathbb{C}^m\otimes \mathbb{C}^m$, we show  that $$SN(P_{sym}^{m,2}+\epsilon\gamma)=\frac{1}{2}SN(\gamma)$$ and $P_{sym}^{m,2}+\epsilon\gamma$ is positive under partial transpose for  $\epsilon\in\left]0, \frac{1}{6}\right]$ (theorem \ref{theoremPPT}).

Moreover, if $\mathbb{C}^m$  contains $m^2$ equiangular lines (i.e., a  SIC-POVM \cite{Bannai, Bodmann, Scott}) then we can replace $\frac{1}{6}$ above by $\frac{m+1}{6m}$  and the result remains the same. The existence of a SIC-POVM in any $\mathbb{C}^{m}$ is an open  problem. So this little improvement can only be made for some values of $m$ (\cite{Roy}).
But if we know beforehand that $SN(\gamma)>2$ then we can replace $\frac{1}{6}$ by $1$ in the interval above.

These mixtures have been  firstly considered in \cite{sindici} to construct entangled PPT states. It was  already noticed in \cite{sindici} that  $SN(\gamma)>2$ would create an entangled mixture. Later in \cite{CarielloIneq, Pal}, it was noticed that $SN(P_{sym}^{m,2}+\epsilon\gamma)\geq\frac{1}{2}SN(\gamma)$, for any $\epsilon>0$, and  arbitrary state $\gamma$ supported on the antisymmetric space. Our new result  shows how the Schmidt numbers of this PPT mixture and the original $\gamma$ are exactly related for sufficiently small $\epsilon$. 

In the third and final situation, we investigate the relationship between the operator Schmidt rank  and  the Schmidt number of PPT states with some extra conditions.

 The operator Schmidt rank (or tensor rank) of $\delta \in \mathcal{M}_m\otimes \mathcal{M}_n$ is $1$, if $\delta=A_1 \otimes A_2\neq 0$. The operator Schmidt rank of an arbitrary $\gamma \in \mathcal{M}_m\otimes \mathcal{M}_n\setminus\{0\}$ is the minimal number of tensors with operator Schmidt rank 1 that can be added to form $\gamma$.

We show that the Schmidt number of any state of  $\mathcal{M}_m\otimes \mathcal{M}_n\ (m \leq n)$   invariant under left partial transpose with operator Schmidt rank 3 is at most $m-2$ (corollary \ref{corollaryinvariant}).  It complements  \cite[Theorem 5]{Marcus}. 
This result is also related  to the conjecture that says that the Schmidt number of any PPT state of  $\mathcal{M}_m\otimes \mathcal{M}_n\ (m \leq n)$ cannot  be $m$ (\cite{sanpera}).

In particular, this theorem says that   every state invariant under left partial transpose with operator Schmidt rank 3 in $\mathcal{M}_3\otimes \mathcal{M}_n$ is separable (theorem \ref{lowtensorrank}).
This result is a new contribution to an ongoing investigation that relates low operator Schmidt rank  to separability.

  States with operator Schmidt rank 2 are always separable (See \cite[Theorem 58]{cariello_QIC1} or \cite{gemma}). In addition,  states  of $\mathcal{M}_2\otimes \mathcal{M}_m$ with operator Schmidt rank 3 are also separable (See \cite[Theorem 19]{Cariello_QIC}). However, this is not valid in  $\mathcal{M}_3\otimes \mathcal{M}_m\ (m\geq 3)$ (See \cite[Proposition 25]{Cariello_QIC}). The invariance under left partial transpose is a sufficient condition for the separability of  states of  $\mathcal{M}_3\otimes \mathcal{M}_m$ with operator Schmidt rank 3.

This work is organized as follows.  In section 2, we obtain an upper bound on the number of mutually unbiased based formed by vectors with Schmidt rank less or equal to $k$. In section 3, we constructed entangled PPT states from  states supported on the antisymmetric subspace and we show how their Schmidt numbers are exactly related. In section 4, we prove that the Schmidt number of operator Schmidt rank 3 states of $\mathcal{M}_k\otimes \mathcal{M}_m\ (k\leq m)$ that are invariant under left partial transpose  cannot exceed $k-2$.

\section{Mutually unbiased  bases }

In this section we provide an upper bound on the number of mutually unbiased  bases of  $\mathbb{C}^m\otimes \mathbb{C}^n$ formed  by vectors with Schmidt rank less or equal to $k$ $(k<m\leq n)$. \\

Denote by $\mathcal{M}_k$ the set of complex matrices of order $k$. Identify  $\mathcal{M}_k\otimes \mathcal{M}_m\simeq \mathcal{M}_{km}$ and  $\mathbb{C}^k\otimes \mathbb{C}^m\simeq \mathbb{C}^{km}$ via Kronecker product.  Let $F_d\in \mathcal{M}_d\otimes \mathcal{M}_d$ be the flip operator $($i.e., $F_d(\ket{a}\otimes \ket{b})=\ket{b}\otimes \ket{a},$ for every $\ket{a},\ket{b}\in\mathbb{C}^d)$.
  
  \vspace{0,3cm}

\begin{definition}\label{definition1}  Let $\mathcal{P}(\rho)=Tr(\rho^2)$, where $\rho$ is a square matrix and $Tr(\rho)$ is its trace.
Denote  the left and the right partial trace of $\gamma\in \mathcal{M}_m\otimes \mathcal{M}_n$ by  $Tr_A(\gamma)\in M_n$ and $Tr_B(\gamma)\in M_m$, respectively.
 Let  the Schmidt rank of $\ket{w}\in\mathbb{C}^m\otimes \mathbb{C}^n$ be the rank of $Tr_A(\ket{w}\!\bra{w})$ and denote it by  $SR(\ket{w})$.\\
\end{definition}
\begin{remark}\label{remarkeasy}Let $Y=Tr_A(\ket{w}\!\bra{w})$. By Cauchy-Schwarz inequality, notice that $$\mathcal{P}(Tr_A(\ket{w}\!\bra{w}))=Tr(Y^2)\geq \frac{Tr(Y)^2}{\text{rank }(Y)}.$$

If $\ket{w}$ is a unit vector then $Tr(Y)=1$. Since $\text{rank }(Y)=\text{SR}(\ket{w})$, $\mathcal{P}(Tr_A(\ket{w}\!\bra{w}))\geq\dfrac{1}{\text{SR}(\ket{w})}.$

This last inequality  shall be used in corollary \ref{maincorollary}.
\end{remark}

\vspace{0,3cm}

\begin{definition}\label{definition3}  Let $\{\ket{e_1},\ldots,\ket{e_m}\}$ and $\{\ket{f_1},\ldots \ket{f_n}\}$ be the canonical bases of $\mathbb{C}^m$ and $\mathbb{C}^n$, respectively. 
 \begin{enumerate}
 \item Let $\displaystyle \ket{\Phi}=\sum_{j=1}^n\ket{f_j}\otimes\ket{f_j}\in \mathbb{C}^n\otimes \mathbb{C}^n$.
 \item Let  $\displaystyle \ket{\Psi}=\sum_{i=1}^m\sum_{j=1}^n\ket{e_i}\otimes\ket{f_j}\otimes \ket{e_i}\otimes\ket{f_j}\in \mathbb{C}^m\otimes\mathbb{C}^n\otimes \mathbb{C}^m\otimes \mathbb{C}^n$.
\item Let $Tr_{1,3}(X)\in \mathcal{M}_n\otimes\mathcal{M}_n$ be the partial trace of  $X\in\mathcal{M}_m\otimes\mathcal{M}_n\otimes\mathcal{M}_m\otimes \mathcal{M}_n$ $($We are tracing out the first and the third sites$)$.\\
\item  Let  the functional $f: \mathcal{M}_m\otimes\mathcal{M}_n\otimes\mathcal{M}_m\otimes \mathcal{M}_n\rightarrow \mathbb{C}$ be as  $f(X)=Tr(Tr_{1,3}(X)\ket{\Phi}\!\bra{\Phi}).$

    Note that $f$ is a positive functional, i.e., it sends positive semidefinite Hermitian matrices to non-negative  real numbers. \\
 
 \end{enumerate}
\end{definition}

\begin{lemma}\label{mainlemma} Let $\ket{\omega}\in \mathbb{C}^m\otimes \mathbb{C}^n$. Then \\
 \begin{enumerate}
\item $f(\ket{\omega}\!\bra{\omega}\otimes \ket{\overline{\omega}}\!\bra{\overline{\omega}})=\mathcal{P}(Tr_A(\ket{\omega}\!\bra{\omega}))$,\\

\item $f(\ket{\Psi}\!\bra{\Psi})=mn^2$,\\

\item $f(Id_{m\times m}\otimes Id_{n\times n}\otimes Id_{m\times m}\otimes Id_{n\times n})=m^2n$.\\

\item $f(F_{mn})=mn$, 

\vspace{0.2cm}

\noindent where $F_{mn}\in \mathcal{M}_{mn}\otimes \mathcal{M}_{mn}$ is the flip operator. Recall the identification $M_m\otimes M_n\simeq M_{mn}.$

\item $f(P_{sym}^{mn,2})=\dfrac{m^2n+mn}{2}$, 

\vspace{0.2cm}

\noindent where $P_{sym}^{mn,2}\in \mathcal{M}_{mn}\otimes \mathcal{M}_{mn}$ is the orthogonal projection on the symmetric subspace of $\mathbb{C}^{mn}\otimes \mathbb{C}^{mn}$.
Recall that $P_{sym}^{mn,2}=\frac{1}{2}(Id_{m\times m}\otimes Id_{n\times n}\otimes Id_{m\times m}\otimes Id_{n\times n}+F_{mn}).$


\end{enumerate}
\end{lemma}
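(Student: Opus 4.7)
The plan is to handle all five items uniformly: first compute the partial trace $Tr_{1,3}(\cdot)\in\mathcal{M}_n\otimes\mathcal{M}_n$ of the operator in question, then evaluate $Tr(\,\cdot\,\ket{\Phi}\!\bra{\Phi})$. The workhorse identity is the standard
\[
\bra{\Phi}(A\otimes B)\ket{\Phi}=Tr(AB^{T})\qquad (A,B\in\mathcal{M}_n),
\]
which follows immediately from $\ket{\Phi}=\sum_j\ket{f_j}\otimes\ket{f_j}$.

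For (1), the tensor structure factorizes cleanly through the partial trace: writing $Y=Tr_A(\ket{\omega}\!\bra{\omega})$ one has $Tr_{1,3}(\ket{\omega}\!\bra{\omega}\otimes\ket{\overline{\omega}}\!\bra{\overline{\omega}})=Y\otimes\overline{Y}$. The displayed identity then collapses $f$ to $Tr(Y\,\overline{Y}^{T})=Tr(YY^{*})$, and since $Y$ is Hermitian this equals $Tr(Y^{2})=\mathcal{P}(Y)$.

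For (2), (3) and (5) the arithmetic is routine. Expanding $\ket{\Psi}\!\bra{\Psi}$ as a sum over $\ket{e_i,f_j,e_i,f_j}\!\bra{e_k,f_l,e_k,f_l}$ and tracing sites $1$ and $3$ forces $i=k$ twice, producing $m\,\ket{\Phi}\!\bra{\Phi}$; pairing with $\ket{\Phi}\!\bra{\Phi}$ then yields $m\,\langle\Phi|\Phi\rangle^{2}=mn^{2}$. For (3), $Tr_{1,3}(Id_{m\times m}\otimes Id_{n\times n}\otimes Id_{m\times m}\otimes Id_{n\times n})=m^{2}(Id_{n\times n}\otimes Id_{n\times n})$, and the displayed identity with $A=B=Id_{n\times n}$ gives $m^{2}n$. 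Item (5) is then immediate from linearity together with $P_{sym}^{mn,2}=\tfrac12(Id+F_{mn})$.

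The only step that really needs care is (4), because of the following notational trap: under the identification $\mathbb{C}^{mn}\simeq\mathbb{C}^m\otimes\mathbb{C}^n$, the flip $F_{mn}$ swaps the two $mn$-blocks as wholes, i.e.\ the pair of sites $(1,2)$ with the pair $(3,4)$, and \emph{not} site by site. Once $F_{mn}$ is written in this permuted form,
\[
F_{mn}=\sum_{i,j,k,l}\ket{e_i}\!\bra{e_k}\otimes\ket{f_j}\!\bra{f_l}\otimes\ket{e_k}\!\bra{e_i}\otimes\ket{f_l}\!\bra{f_j},
\]
tracing sites $1$ and $3$ forces $i=k$ twice and leaves $m\,F_n$ on the remaining two $\mathcal{M}_n$ sites. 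Since $F_n\ket{\Phi}=\ket{\Phi}$, the displayed identity delivers $f(F_{mn})=m\,\langle\Phi|\Phi\rangle=mn$, as claimed.
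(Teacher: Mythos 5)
Your proposal is correct, and all five computations check out (in particular the careful handling of $F_{mn}$ as swapping the site pairs $(1,2)\leftrightarrow(3,4)$, which is indeed the one place where a sign-up is easy to make). The paper simply declares the lemma straightforward and leaves it to the reader, so your direct computation via $Tr_{1,3}$ and the identity $\bra{\Phi}(A\otimes B)\ket{\Phi}=Tr(AB^{T})$ is exactly the intended routine verification, now written out in full.
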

\begin{proof}
The proof of this lemma is straightforward. It is left to the reader. 
\end{proof}

\vspace{0.5cm}

\begin{Theorem}\label{maintheorem} Let $\{\ket{\Psi_{j1}},\ldots, \ket{\Psi_{j(mn)}}\}$,  $j = 1,\ldots,t$, be mutually unbiased bases of a $mn$-dimensional Hilbert space $\mathcal{H}$.   Then
\begin{itemize} 
\item[$a)$] $\displaystyle \sum_{j=1}^t\sum_{i=1}^{mn} \mathcal{P}(Tr_A(\ket{\Psi_{ji}}\!\bra{\Psi_{ji}}))\leq (m^2+t-1)n$, if $\mathcal{H}=\mathbb{C}^m \otimes
 \mathbb{C}^n$,
\item[$b)$] $\displaystyle\sum_{j=1}^t\sum_{i=1}^{mn} \mathcal{P}(Tr_A(\ket{\Psi_{ji}}\!\bra{\Psi_{ji}}))\leq  \left(\dfrac{m(m+1)}{2}+t-1\right)n,$ if $\mathcal{H}=\mathbb{R}^m \otimes
 \mathbb{R}^n$.
\end{itemize} 

\end{Theorem}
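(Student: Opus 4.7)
The plan is to reduce both parts of the theorem to a spectral analysis of the single operator
$$Y := \sum_{j=1}^{t}\sum_{i=1}^{mn} \ket{\Psi_{ji}}\bra{\Psi_{ji}} \otimes \ket{\overline{\Psi_{ji}}}\bra{\overline{\Psi_{ji}}}.$$
By Lemma \ref{mainlemma}(1), the left-hand side of both inequalities equals $f(Y)$, so it is enough to bound $f(Y)$ from above. Since $f$ is a positive functional and its values on $Id^{\otimes 4}$, $\ket{\Psi}\bra{\Psi}$, and $P_{sym}^{mn,2}$ are already tabulated in Lemma \ref{mainlemma}, any operator inequality $Y\leq A$ with $A$ a positive combination of those reference operators will transfer directly into the numerical bound.

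The key step will be to pin down the spectrum of $Y$. For every fixed basis $j$, expanding the $\ket{\Psi_{ji}}$ in the canonical basis and using unitarity of the change-of-basis matrix gives the identity $\sum_i \ket{\Psi_{ji}}\otimes\ket{\overline{\Psi_{ji}}}=\ket{\Psi}$, from which $Y\ket{\Psi}=t\ket{\Psi}$ follows at once. A direct expansion of $Y^{2}$, combining orthonormality inside each basis with the mutually unbiased condition $|\braket{\Psi_{ji}|\Psi_{j'i'}}|^{2}=1/(mn)$ for $j\neq j'$, should collapse the quadruple sum to
$$Y^{2}\;=\;Y\;+\;\frac{t(t-1)}{mn}\,\ket{\Psi}\bra{\Psi}.$$
Because $Y$ is Hermitian, these two facts force its spectrum into $\{0,1,t\}$, with $t$ a simple eigenvalue whose eigenline is $\mathbb{C}\ket{\Psi}$. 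Consequently $Y=\tfrac{t}{mn}\ket{\Psi}\bra{\Psi}+P_{1}$, where $P_{1}$ is an orthogonal projector perpendicular to $\ket{\Psi}$, and $f(Y)=tn+f(P_{1})$.

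For part $(a)$ the trivial inequality $P_{1}\leq Id-\tfrac{1}{mn}\ket{\Psi}\bra{\Psi}$, combined with $f(Id^{\otimes 4})=m^{2}n$ and $f(\ket{\Psi}\bra{\Psi})=mn^{2}$, yields $f(P_{1})\leq(m^{2}-1)n$ and hence the asserted bound $(m^{2}+t-1)n$. For part $(b)$, in the real setting $\ket{\overline{\Psi_{ji}}}=\ket{\Psi_{ji}}$, so every summand of $Y$ has the form $\ket{v,v}\bra{v,v}$ and is supported on the symmetric subspace of $\mathbb{C}^{mn}\otimes\mathbb{C}^{mn}$. This confines the image of $P_{1}$ to that subspace, yielding the tighter estimate $P_{1}\leq P_{sym}^{mn,2}-\tfrac{1}{mn}\ket{\Psi}\bra{\Psi}$, which combined with $f(P_{sym}^{mn,2})=(m^{2}n+mn)/2$ gives exactly the real bound. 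The main obstacle will be the clean verification of the quadratic identity for $Y^{2}$: this is the single place where the mutually unbiased hypothesis enters, and the whole spectral argument hinges on it.
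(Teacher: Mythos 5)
Your proposal is correct and is essentially the paper's own argument: your operator inequality $Y\leq Id+\tfrac{t-1}{mn}\ket{\Psi}\!\bra{\Psi}$ (resp.\ $Y\leq P_{sym}^{mn,2}+\tfrac{t-1}{mn}\ket{\Psi}\!\bra{\Psi}$ in the real case) is exactly the positivity of the matrix $B$ in the paper's proof, and the final step of applying the positive functional $f$ with the values from Lemma \ref{mainlemma} is identical. The only difference is that you derive the key relation $A_jA_k=\tfrac{1}{mn}\ket{\Psi}\!\bra{\Psi}$ and the resulting spectral structure of $Y$ from scratch, where the paper cites \cite[Lemma 34]{CarielloIEEE}.
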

\begin{proof} 
Consider the orthogonal projections $A_{1},\ldots, A_{t}\in \mathcal{M}_m\otimes\mathcal{M}_n\otimes\mathcal{M}_m\otimes \mathcal{M}_n$ defined by 
$$\displaystyle A_{j}=\sum_{i=1}^{d}\ket{\Psi_{ji}}\bra{\Psi_{ji}}\otimes \ket{\overline{\Psi_{ji}}}\!\bra{\overline{\Psi_{ji}}}.$$

By  \cite[Lemma 34]{CarielloIEEE}, $A_{j}A_{k}=A_{k}A_{j}=\dfrac{1}{mn}\ket{\Psi}\!\bra{\Psi}$, for every $j,k\in\{1,\dots,t\}$ and $j\neq k$.\\

Therefore, the matrix  

\begin{equation}\label{eqA}
 B =Id_{m\times m}\otimes Id_{n\times n}\otimes Id_{m\times m}\otimes Id_{n\times n}+\dfrac{t-1}{mn}\ket{\Psi}\!\bra{\Psi}-\sum_{j=1}^tA_{j}
\end{equation} 

is positive semidefinite.

\vspace{0.5cm}

 \underline{Case $a)$: $\mathcal{H}=\mathbb{C}^m \otimes
 \mathbb{C}^n$}\\

 By lemma \ref{mainlemma}, equation (\ref{eqA}) and the positivity of $B$, we have 

$$ f(B)=m^2n+(t-1)n-\sum_{j=1}^t\sum_{i=1}^{mn} \mathcal{P}(Tr_A(\ket{\Psi_{ji}}\!\bra{\Psi_{ji}}))\geq 0.$$

Finally,\ \  $\displaystyle \sum_{j=1}^t\sum_{i=1}^{mn} \mathcal{P}(Tr_A(\ket{\Psi_{ji}}\!\bra{\Psi_{ji}}))\leq (m^2+t-1)n.$\\\\

 \underline{Case $b)$: $\mathcal{H}=\mathbb{R}^m \otimes
 \mathbb{R}^n$}\\

In this case, $\ket{\overline{\Psi_{ji}}}=\ket{\Psi_{ji}}$ for every $j,i$. Therefore every $A_j$ is supported on the symmetric subspace of $\mathcal{H}\otimes\mathcal{H}$. Hence,

\begin{equation}
\label{eqD} \displaystyle P_{sym}^{mn,2}BP_{sym}^{mn,2} =P_{sym}^{mn,2}+\dfrac{t-1}{mn}\ket{\Psi}\!\bra{\Psi}-\sum_{j=1}^tA_j
\end{equation}

 is positive semidefinite.

 By lemma \ref{mainlemma}, equation (\ref{eqD}) and the positivity of $P_{sym}^{mn,2}BP_{sym}^{mn,2}$, we have 
$$ f(P_{sym}^{mn,2}BP_{sym}^{mn,2})= \left(\dfrac{m(m+1)}{2}+t-1\right)n-\sum_{j=1}^t\sum_{i=1}^{mn} \mathcal{P}(Tr_A(\ket{\Psi_{ji}}\!\bra{\Psi_{ji}}))\geq 0.
 $$
 
Finally,\ \ \  $\displaystyle \sum_{j=1}^t\sum_{i=1}^{mn} \mathcal{P}(Tr_A(\ket{\Psi_{ji}}\!\bra{\Psi_{ji}}))\leq \left(\dfrac{m(m+1)}{2}+t-1\right)n.$
\end{proof}

\vspace{0.3cm}

\begin{remark}\label{remarkconservationlaws}  Using the notation of the proof of Theorem \ref{maintheorem}, we can describe the equation obtained in \cite[Lemma 35]{CarielloIEEE} as $\displaystyle B=Id_{m\times m}\otimes Id_{n\times n}\otimes Id_{m\times m}\otimes Id_{n\times n}+\ket{\Psi}\!\bra{\Psi}-\sum_{j=1}^{mn+1}A_{j}=0.$ Thus,
  \begin{center}
$\displaystyle 0=f(B)=mn(m+n)-\sum_{j=1}^{mn+1}\sum_{i=1}^{mn} \mathcal{P}(Tr_A (\ket{\Psi_{ji}}\!\bra{\Psi_{ji}})).$ 
\end{center}
Therefore,  we recover the conservation law obtained in \cite{Wiesniak}.
\end{remark}

\vspace{0,5cm}

\begin{corollary}\label{maincorollary} Let $k<m\leq n $. The  number of mutually unbiased bases  formed by vectors with Schmidt rank less or equal to $ k$ cannot exceed 

\begin{itemize}
\item[$a)$] $\dfrac{k(m^2-1)}{m-k}$ in $\mathbb{C}^m\otimes\mathbb{C}^n$ and
\item[$b)$] $ \dfrac{k}{2}\dfrac{(m(m+1)-2)}{(m-k)}$ in $\mathbb{R}^m\otimes\mathbb{R}^n$.
\end{itemize}
 In particular, the  number of mutually unbiased product bases  cannot exceed $ m+1$ in  $\mathbb{C}^m\otimes\mathbb{C}^n$.
\end{corollary}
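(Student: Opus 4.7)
The plan is to combine Theorem \ref{maintheorem} with the elementary lower bound in Remark \ref{remarkeasy} and then solve for the number $t$ of bases.

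First I would observe that if every basis vector $\ket{\Psi_{ji}}$ has Schmidt rank at most $k$, then by Remark \ref{remarkeasy},
\begin{equation*}
\mathcal{P}(Tr_A(\ket{\Psi_{ji}}\!\bra{\Psi_{ji}})) \geq \frac{1}{SR(\ket{\Psi_{ji}})} \geq \frac{1}{k}
\end{equation*}
for every $i,j$. Summing over the $t$ bases and the $mn$ vectors of each basis gives the lower bound
\begin{equation*}
\sum_{j=1}^t\sum_{i=1}^{mn}\mathcal{P}(Tr_A(\ket{\Psi_{ji}}\!\bra{\Psi_{ji}})) \geq \frac{t\,mn}{k}.
\end{equation*}

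Next I would compare this with the upper bounds furnished by Theorem \ref{maintheorem}. In case $a)$ we get $\frac{tmn}{k}\leq (m^{2}+t-1)n$; cancelling $n$ and rearranging yields $t(m-k)\leq k(m^{2}-1)$, hence
\begin{equation*}
t\leq \frac{k(m^{2}-1)}{m-k}.
\end{equation*}
In case $b)$ the same manipulation with the bound $\bigl(\tfrac{m(m+1)}{2}+t-1\bigr)n$ gives $2t(m-k)\leq k(m(m+1)-2)$, so $t\leq \frac{k(m(m+1)-2)}{2(m-k)}$. The denominator $m-k$ is positive because of the hypothesis $k<m$, so the division is legitimate.

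Finally, for the ``In particular'' clause I would simply substitute $k=1$ into the bound from case $a)$: this gives $\frac{m^{2}-1}{m-1}=m+1$, which is exactly McNulty et al.'s conjectured bound for mutually unbiased product bases. There is no real obstacle here, since Theorem \ref{maintheorem} does the heavy lifting; the only thing to verify is the sign of $m-k$ and the straightforward algebra. It may be worth remarking that the bound is tight at $k=1$ in prime-power dimensions (where $m+1$ mutually unbiased product bases exist via a tensor product construction), which frames the corollary as an essentially sharp generalisation.
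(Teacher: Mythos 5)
Your proposal is correct and follows essentially the same route as the paper: apply Remark \ref{remarkeasy} to bound each purity from below by $1/k$, sum to get $tmn/k$, compare with the two upper bounds of Theorem \ref{maintheorem}, and solve for $t$ using $m-k>0$. The algebra in both cases and the specialisation $k=1$ giving $m+1$ match the paper's argument exactly.
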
 
\begin{proof}
 Let $\{\ket{\Psi_{j1}},\ldots, \ket{\Psi_{j(mn)}}\}$, for $j = 1,\ldots,t$, be mutually unbiased bases formed by vectors with Schmidt rank less or equal to $k$. \\
 
Since the Schmidt rank  of each unit vector  $ \ket{\Psi_{ji}}$ is less or equal to $k$,\ $\mathcal{P}(Tr_A(\ket{\Psi_{ji}}\!\bra{\Psi_{ji}}))\ge\dfrac{1}{k}$ (Remark \ref{remarkeasy}).\\
 
By  Theorem \ref{maintheorem},  
\begin{itemize}
\item[$a)$]  $\displaystyle t\ \dfrac{mn}{k}\leq \sum_{j=1}^{t}\sum_{i=1}^{mn} \mathcal{P}(Tr_A(\ket{\Psi_{ji}}\!\bra{\Psi_{ji}}))\leq (m^2+t-1)n$  in  $\mathbb{C}^m\otimes\mathbb{C}^n$ and
\item[$b)$]  $\displaystyle t\ \dfrac{mn}{k}\leq \sum_{j=1}^{t}\sum_{i=1}^{mn} \mathcal{P}(Tr_A(\ket{\Psi_{ji}}\!\bra{\Psi_{ji}}))\leq \left(\dfrac{m(m+1)}{2}+t-1\right)n$ in    $\mathbb{R}^m\otimes\mathbb{R}^n$.
\end{itemize}

\vspace{0.3cm}

Hence,     
\begin{itemize}
\item[$a)$]  $t\leq k\left(\dfrac{m^2-1}{m-k}\right)$ in  $\mathbb{C}^m\otimes\mathbb{C}^n$ and
\item[$b)$]  $t\leq \dfrac{k}{2}\dfrac{(m(m+1)-2)}{(m-k)}$  in  $\mathbb{R}^m\otimes\mathbb{R}^n$.
\end{itemize}
\end{proof}

\vspace{0,3cm}

\begin{remark}The upper bounds obtained in the last corollary do not depend on $n$. For instance, in $\mathbb{C}^m\otimes \mathbb{C}^n$  for $n$ much larger than $m$,  our upper bound  turns out to be much smaller than  $mn+1$. In fact, we have  \begin{center}
 $\dfrac{k(m^2-1)}{m-k}<mn+1$, if and only if, $k< \dfrac{mn+1}{m+n}$.
 \end{center}
 
Now, if no restriction on $n$ is imposed, besides $n\geq m$,   we have
$\dfrac{k(m^2-1)}{m-k}<mn-1$ for $k\leq \dfrac{m}{2}$.

This is  interesting because   $mn-1$ turns out to be  an upper bound on the number of mutually unbiased bases of  $\mathbb{C}^m\otimes\mathbb{C}^n$ formed by vectors with   Schmidt coefficients equal to $\dfrac{1}{\sqrt{k}}$ and  $n$ is a multiple of $m$ \cite[Theorem 4]{Shi}. Our result improves this upper bound when $k\leq \dfrac{m}{2}$.
Nothing can be said about the case $k>\dfrac{m}{2}$ with our method.
There is an extensive literature on these bases with fixed Schmidt coefficients \cite{XU, XU2, Guo, Mao, Werner}.  

\end{remark}

\vspace{0,3cm} 

The next corollary solves conjecture 1 in \cite{Mc}.
\vspace{0,3cm} 

\begin{corollary}\label{corollaryconjecture} The maximum number of mutually unbiased product bases of $\mathbb{C}^{d_1}\otimes\ldots\otimes\mathbb{C}^{d_n}$ is less or equal to $\displaystyle \min_{j} d_j+1$. Note that if $d_1,\ldots,d_n$ are powers of distinct primes then this maximum number is exactly $\displaystyle\min_{j} d_j+1$.
\end{corollary}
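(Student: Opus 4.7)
The plan is to reduce the multipartite statement to the bipartite bound already established in Corollary \ref{maincorollary}(a). Choose an index $j_0$ with $d_{j_0}=\min_j d_j$ and regroup the tensor factors as $\mathbb{C}^{d_{j_0}}\otimes\mathbb{C}^N$, where $N=\prod_{j\neq j_0}d_j$. A product vector $\ket{v_1}\otimes\cdots\otimes\ket{v_n}$ becomes $\ket{v_{j_0}}\otimes\bigl(\bigotimes_{j\neq j_0}\ket{v_j}\bigr)$ across this bipartite cut, so it has bipartite Schmidt rank exactly $1$. Consequently, any family of mutually unbiased product bases of $\mathbb{C}^{d_1}\otimes\cdots\otimes\mathbb{C}^{d_n}$ is, in particular, a family of mutually unbiased bases of $\mathbb{C}^{d_{j_0}}\otimes\mathbb{C}^N$ consisting of Schmidt rank $1$ vectors. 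Applying Corollary \ref{maincorollary}(a) with $k=1$ and $m=d_{j_0}$ then yields the upper bound $\tfrac{d_{j_0}^{2}-1}{d_{j_0}-1}=d_{j_0}+1=\min_j d_j+1$.

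For the equality statement, I would invoke the classical fact that in every prime power dimension $d_j$ there exist $d_j+1$ mutually unbiased bases in $\mathbb{C}^{d_j}$. Fix $M=\min_j d_j+1$ and, in each factor $\mathbb{C}^{d_j}$, pick $M$ of its MUBs, say $\{\ket{\psi^k_{j,1}},\ldots,\ket{\psi^k_{j,d_j}}\}$ for $k=1,\ldots,M$. I would then form the tensor product bases
\[
\widetilde B^k=\{\ket{\psi^k_{1,i_1}}\otimes\cdots\otimes\ket{\psi^k_{n,i_n}}:1\leq i_j\leq d_j\},\qquad k=1,\ldots,M,
\]
and verify mutual unbiasedness by factoring the inner products: for $k\neq k'$,
\[
\bigl|\langle\psi^k_{1,i_1}\otimes\cdots\otimes\psi^k_{n,i_n},\,\psi^{k'}_{1,i'_1}\otimes\cdots\otimes\psi^{k'}_{n,i'_n}\rangle\bigr|=\prod_{j=1}^{n}\frac{1}{\sqrt{d_j}}=\frac{1}{\sqrt{d_1\cdots d_n}}.
\]
Combined with the upper bound from the first paragraph, this pins the maximum down to exactly $\min_j d_j+1$.

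The main obstacle is essentially conceptual rather than technical: one has to recognize that a fully product vector is precisely a Schmidt rank $1$ vector across every bipartite cut, so that the multipartite MUB problem restricted to product bases is just a special instance of the Schmidt rank $k=1$ case of Corollary \ref{maincorollary}(a) applied at the smallest factor. The only edge case to dispatch is $\min_j d_j=1$, where the reduction becomes trivial and the bound $2$ holds tautologically. No new analytic input beyond the previous corollary and the classical prime power MUB construction is needed.
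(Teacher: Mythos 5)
Your proposal is correct and follows essentially the same route as the paper: reduce to the bipartite cut $\mathbb{C}^{d_{j_0}}\otimes\mathbb{C}^{N}$ at the smallest factor, observe that product vectors have Schmidt rank $1$ there, and apply Corollary \ref{maincorollary}(a) with $k=1$ to get the bound $d_{j_0}+1$. The only difference is that you also spell out the tensor-product MUB construction proving the exactness claim, which the paper states as a remark without proof.
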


\begin{proof}
 Assume without loss of generality that   $d_1+1=\displaystyle\min_{j} d_j+1$. Since a product vector in $\mathbb{C}^{d_1}\otimes\ldots\otimes\mathbb{C}^{d_n}$ is also a product vector in $\mathbb{C}^{d_1}\otimes\mathbb{C}^{d_2\ldots d_n}$, the maximum number of mutually unbiased product bases of $\mathbb{C}^{d_1}\otimes\ldots\otimes\mathbb{C}^{d_n}$ is less or equal to the same number in  $\mathbb{C}^{d_1}\otimes\mathbb{C}^{d_2\ldots d_n}$.
 
  By corollary \ref{maincorollary}, the maximum number of mutually unbiased product bases of $\mathbb{C}^{d_1}\otimes\mathbb{C}^{d_2\ldots d_n}$ cannot exceed $d_1+1$. 
\end{proof}

\section{Entangled PPT Mixtures}

Let us call $\delta\in \mathcal{M}_k\otimes \mathcal{M}_m$ a   state, if  $\delta$ is a positive semidefinite Hermitian matrix with trace 1.

In this section, we show that  $P_{sym}^{k,2}+\epsilon\gamma$ is  positive under partial transpose and 
 $$SN(P_{sym}^{k,2}+\epsilon\gamma)=\frac{SN(\gamma)}{2}$$
 
 for sufficiently small $\epsilon$, where  $\gamma\in \mathcal{M}_k\otimes \mathcal{M}_k$ is any state supported on the antisymmetric subspace of $\mathbb{C}^k\otimes\mathbb{C}^k$ (theorem \ref{theoremPPT}).  In order to obtain this result, we need the following equation obtained in \cite{Klappenecker}.

If $\{\ket{\Psi_{j1}},\ldots, \ket{\Psi_{jk}}\}$, $1\leq j\leq  k+1$, are $k+1$ mutually unbiased bases of $\mathbb{C}^k$ then 

\begin{equation}\label{Conservation2}
2P_{sym}^{k,2}=\sum_{j=1}^{k+1}\sum_{i=1}^{k}\ket{\Psi_{ji}}\!\bra{\Psi_{ji}}\otimes \ket{\Psi_{ji}}\!\bra{\Psi_{ji}}\in \mathcal{M}_k\otimes \mathcal{M}_k.
\end{equation}

\vspace{0.3 cm}

\begin{definition}\label{definition10}
Let the right partial transpose  of  $\displaystyle \delta=\sum_{i=1}^nA_i\otimes B_i \in \mathcal{M}_k\otimes \mathcal{M}_m$ be $\displaystyle \delta^{\Gamma}=\sum_{i=1}^nA_i\otimes B_i^t$
$($The left partial transpose is defined analogously$)$.
Moreover, let us say that  $\delta$ is positive under partial transpose or simply  PPT   if  $\delta$ and $\delta^{\Gamma}$ are positive semidefinite Hermitian matrices.
 \end{definition}

\vspace{0.1cm}

\begin{lemma}\label{lemmasr=1} Let  $\ket{a}\in \mathbb{C}^k$ be a unit vector. Then $P_{sym}^{k,2}-\epsilon \ket{a}\!\bra{a}\otimes \ket{a}\!\bra{a}\in \mathcal{M}_k\otimes \mathcal{M}_k$ is separable for $\epsilon\leq \frac{1}{2}$. In addition, if $\mathbb{C}^k$ contains a SIC-POVM then the same matrix is separable if $\epsilon\leq \frac{k+1}{2k}$. 
\end{lemma}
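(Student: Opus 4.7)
By the invariance $(U\otimes U)P_{sym}^{k,2}(U\otimes U)^{*}=P_{sym}^{k,2}$ for $U\in U(k)$, I may replace $\ket{a}$ by any vector in its unitary orbit without loss of generality. The plan is to present $P_{sym}^{k,2}-\epsilon\,\ket{a}\!\bra{a}^{\otimes 2}$ as a positive combination of rank-one product states, using a finite (SIC-POVM) or integral (general) $2$-design adapted to $\ket{a}$.

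For the SIC-POVM case, I would exploit that if $\{\ket{\psi_{j}}\}_{j=1}^{k^{2}}\subset\mathbb{C}^{k}$ is a SIC-POVM, then the equiangular condition is equivalent to the $2$-design identity
\[
P_{sym}^{k,2}=\frac{k+1}{2k}\sum_{j=1}^{k^{2}}\ket{\psi_{j}}\!\bra{\psi_{j}}^{\otimes 2}.
\]
After conjugating with $U\otimes U$ for $U\in U(k)$ sending $\ket{\psi_{1}}$ to $\ket{a}$, subtracting $\tfrac{k+1}{2k}\ket{a}\!\bra{a}^{\otimes 2}$ leaves a positive sum of $k^{2}-1$ rank-one product projectors, which is separable by inspection.

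For the general $\epsilon\leq\tfrac{1}{2}$ bound I mimic the same argument with the Haar measure on $\ket{a}^{\perp}$ playing the role of the finite SIC. Splitting $\mathbb{C}^{k}=\mathbb{C}\ket{a}\oplus\ket{a}^{\perp}$ gives
\[
P_{sym}^{k,2}=\ket{aa}\!\bra{aa}+\sigma_{cross}+P_{sym,\perp}^{k-1,2},
\]
where $\sigma_{cross}$ is the symmetric projector on $\mathrm{span}\{\ket{av}+\ket{va}:v\in\ket{a}^{\perp}\}$ and $P_{sym,\perp}^{k-1,2}$ is the symmetric projector on $\ket{a}^{\perp}\otimes\ket{a}^{\perp}$. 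For $t\in[0,1]$ and unit $v\in\ket{a}^{\perp}$, the vector $\ket{\psi(t,v)}=\sqrt{1-t^{2}}\ket{a}+t\ket{v}$ defines a rank-one product state $\ket{\psi(t,v)}\!\bra{\psi(t,v)}^{\otimes 2}$. Averaging $v$ against the unitarily invariant probability measure $d\mu$ on the unit sphere of $\ket{a}^{\perp}$, together with the standard moments
\[
\int \ket{v}\!\bra{v}\,d\mu=\frac{I-\ket{a}\!\bra{a}}{k-1},\qquad \int \ket{vv}\!\bra{vv}\,d\mu=\frac{2\,P_{sym,\perp}^{k-1,2}}{k(k-1)},
\]
and the cross Haar identity $\int(\ket{av}+\ket{va})(\bra{av}+\bra{va})\,d\mu=\tfrac{2}{k-1}\sigma_{cross}$, I get
\[
\int\ket{\psi(t,v)}\!\bra{\psi(t,v)}^{\otimes 2}\,d\mu=(1-t^{2})^{2}\ket{aa}\!\bra{aa}+\frac{2\,t^{2}(1-t^{2})}{k-1}\sigma_{cross}+\frac{2\,t^{4}}{k(k-1)}P_{sym,\perp}^{k-1,2}.
\]
Setting $t^{2}=k/(k+1)$ equalizes the last two coefficients, and a one-line rearrangement gives
\[
P_{sym}^{k,2}-\frac{k+1}{2k}\ket{aa}\!\bra{aa}=\frac{(k-1)(k+1)^{2}}{2k}\int\ket{\psi(t,v)}\!\bra{\psi(t,v)}^{\otimes 2}\,d\mu,
\]
which is separable as a positive combination of product states (and indeed a finite convex combination, via any complex projective $2$-design on $\ket{a}^{\perp}$). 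Since $\tfrac{k+1}{2k}\geq\tfrac{1}{2}$ and $\ket{aa}\!\bra{aa}$ is itself separable, adding back $\tfrac{1}{2k}\ket{aa}\!\bra{aa}$ delivers the unconditional bound $\epsilon\leq\tfrac{1}{2}$.

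The only non-routine step is the cross Haar identity for $\sigma_{cross}$. It reduces to computing $\int \ket{a}\!\bra{v}\otimes\ket{v}\!\bra{a}\,d\mu$ and its adjoint in a fixed orthonormal basis $\{v_{i}\}$ of $\ket{a}^{\perp}$, which via $\int c_{i}\bar c_{j}\,d\mu=\tfrac{1}{k-1}\delta_{ij}$ (for $v=\sum c_{i}v_{i}$) collapses to $\tfrac{1}{k-1}\sum_{i}\ket{av_{i}}\!\bra{v_{i}a}$ and its adjoint, which assemble with the two diagonal pieces into $\sigma_{cross}$. Everything else is bookkeeping.
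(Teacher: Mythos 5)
Your proof is correct, and for the unconditional bound it takes a genuinely different route from the paper. The SIC-POVM half coincides with the paper's argument: write $P_{sym}^{k,2}=\frac{k+1}{2k}\sum_{j=1}^{k^2}\ket{\psi_j}\!\bra{\psi_j}\otimes\ket{\psi_j}\!\bra{\psi_j}$ with $\ket{\psi_1}=\ket{a}$ and remove part of one term. For the general case the paper instead embeds $\mathbb{C}^k$ into $\mathbb{C}^n$ for a prime $n>k$, expands $P_{sym}^{n,2}$ through the $n+1$ mutually unbiased bases of $\mathbb{C}^n$ as in equation (\ref{Conservation2}) with $\ket{a}$ placed as one of the basis vectors, subtracts up to the coefficient $\frac{1}{2}$ sitting in front of that single projector, and compresses back with the isometry $U_{k\times n}$. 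You replace this MUB bookkeeping with a Haar average of the product states $\ket{\psi(t,v)}\!\bra{\psi(t,v)}\otimes\ket{\psi(t,v)}\!\bra{\psi(t,v)}$ over the unit sphere of $\ket{a}^{\perp}$. I checked the three surviving moments (the cross terms vanish by phase invariance), the balancing choice $t^2=k/(k+1)$, and the resulting identity expressing $P_{sym}^{k,2}-\frac{k+1}{2k}\ket{a}\!\bra{a}\otimes\ket{a}\!\bra{a}$ as $\frac{(k-1)(k+1)^2}{2k}$ times the average; since the barycenter of a probability measure on the compact set of product states lies in their convex hull (which is compact in finite dimensions), the average is separable, and adding back a positive multiple of $\ket{a}\!\bra{a}\otimes\ket{a}\!\bra{a}$ covers every $\epsilon\le\frac{k+1}{2k}$. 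Your argument therefore proves strictly more than the lemma states: the threshold $\frac{k+1}{2k}$ holds for every $k$ with no SIC-POVM hypothesis, and it is optimal, since at $\epsilon=\frac{k+1}{2k}$ the partial transpose of $P_{sym}^{k,2}-\epsilon\ket{a}\!\bra{a}\otimes\ket{a}\!\bra{a}$ becomes singular. What each approach buys: the paper's MUB embedding needs only the classical existence of $n+1$ MUBs in prime dimension and yields an explicit finite decomposition, but its constant $\frac{1}{2}$ is not tight; your average is adapted to $\ket{a}$, reaches the exact separability threshold, and, if adopted, would let the SIC-dependent intervals in lemma \ref{lemmasr=2}, lemma \ref{lemmageneral} and theorem \ref{theoremPPT} be replaced by their unconditional counterparts, at the small cost of invoking Carath\'eodory or a weighted projective $2$-design on $\ket{a}^{\perp}$ to make the decomposition finite.
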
  
\begin{proof}
Let $n$ be a prime number greater than $k$. Let $\{\ket{a_{j1}},\ldots, \ket{a_{jn}}\}$, $1\leq j\leq  n+1$, be $n+1$ mutually unbiased bases of $\mathbb{C}^n$ \cite{Ivanovic}. We can assume without loss of generality that  $\ket{a_{11}}=\left(\begin{array}{c}
  \ket{a} \\ 
  0
  \end{array}\right) \in \mathbb{C}^k\times\mathbb{C}^{n-k}$.
  
 By   equation \ref{Conservation2},  
$$P_{sym}^{n,2}=\frac{1}{2}\left(\sum_{j=1}^{n+1}\sum_{i=1}^{n}\ket{a_{ji}}\!\bra{a_{ji}}\otimes \ket{a_{ji}}\!\bra{a_{ji}}\right).$$

Thus,  \ \  $B=P_{sym}^{n,2}- \epsilon \ket{a_{11}}\!\bra{a_{11}}\otimes\ket{a_{11}}\!\bra{a_{11}}$

$$ = \left(\frac{1}{2}- \epsilon\right) \ket{a_{11}}\!\bra{a_{11}}\otimes\ket{a_{11}}\!\bra{a_{11}}+\frac{1}{2}\left(\sum_{i=2}^n \ket{a_{1i}}\!\bra{a_{1i}}\otimes\ket{a_{1i}}\!\bra{a_{1i}}+
\sum_{j=2}^{n+1}\sum_{i=1}^{n}\ket{a_{ji}}\!\bra{a_{ji}}\otimes \ket{a_{ji}}\!\bra{a_{ji}}\right).$$

 is  separable for $\epsilon\leq \frac{1}{2}$.\\

 \vspace{0.1cm}

Now, let $U_{k\times n}=(Id_{k\times k}\ 0_{k\times n-k})$ and  note that

$$(U\otimes U)B(U^*\otimes U^*) =P_{sym}^{k,2}- \epsilon \ket{a}\!\bra{a}\otimes \ket{a}\!\bra{a}.$$

 \vspace{0.3cm}

  So $P_{sym}^{k,2}- \epsilon \ket{a}\!\bra{a}\otimes \ket{a}\!\bra{a}\in \mathcal{M}_k\otimes \mathcal{M}_k$ is separable too for $\epsilon\leq \frac{1}{2}$.\\
  
\vspace{0.1cm}

Next, if  $\mathbb{C}^k$ contains a SIC-POVM then we can write $$P_{sym}^{k,2}=\sum_{i=1}^{k^2}\frac{k+1}{2k} \ket{v_{i}}\!\bra{v_i}\otimes\ket{v_{i}}\!\bra{v_i}\, $$ 

where $\ket{v_1}=\ket{a}$ (\cite[Definition 2.1]{Roy}). Thus, $P_{sym}^{k,2}- \epsilon \ket{a}\!\bra{a}\otimes \ket{a}\!\bra{a}=$ 
$$\left(\frac{k+1}{2k}-\epsilon\right) \ket{v_{1}}\!\bra{v_1}\otimes\ket{v_{1}}\!\bra{v_1}+\sum_{i=2}^{k^2}\frac{k+1}{2k}  \ket{v_{i}}\!\bra{v_i}\otimes\ket{v_{i}}\!\bra{v_i}.$$ 
  
In this case, $P_{sym}^{k,2}- \epsilon \ket{a}\!\bra{a}\otimes \ket{a}\!\bra{a}$ is separable for 
 $\epsilon\leq \frac{k+1}{2k}$. 
\end{proof}  

\vspace{0.3cm}

  \begin{lemma} \label{lemmasr=2}Let $\ket{a_1},\ket{a_2}$ be orthonormal vectors of $\mathbb{C}^k$ and $\ket{s}=\ket{a_1}\otimes \ket{a_2}+\ket{a_2}\otimes \ket{a_1}$. Consider $B=  P_{sym}^{k,2}-\epsilon \ket{s}\!\bra{s}\in \mathcal{M}_k\otimes \mathcal{M}_k$. Then 
  
\begin{itemize}
\item[$a)$]  $SN(B)\leq 2$ for $\epsilon\leq \frac{1}{2}$ and arbitrary $k$,
\item[$b)$] $SN(B)=1$  for $\epsilon\in \left[0,\frac{1}{12}\right]$ and arbitrary $k$, 
\item[$c)$] $SN(B)=1$ for $\epsilon\in \left[0,\frac{k+1}{12k}\right]$, if  $\mathbb{C}^k$ contains a SIC-POVM.

\end{itemize}

\end{lemma}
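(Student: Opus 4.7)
My plan is to mirror the structure of Lemma \ref{lemmasr=1}, after a single preparatory step that rewrites the Schmidt-rank-$2$ projector $\ket{s}\!\bra{s}$ as a signed combination of rank-one product projections drawn from the three mutually unbiased bases of the plane $\mathrm{span}\{\ket{a_1},\ket{a_2}\}\cong\mathbb{C}^2$.

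For part $(a)$ I would complete $\ket{a_1},\ket{a_2}$ to an orthonormal basis $\{\ket{a_1},\ldots,\ket{a_k}\}$ of $\mathbb{C}^k$ and use the symmetric-basis decomposition
\[
P_{sym}^{k,2}=\sum_{i=1}^{k}\ket{a_i}\!\bra{a_i}\otimes\ket{a_i}\!\bra{a_i}+\sum_{i<j}\ket{s_{ij}}\!\bra{s_{ij}},
\]
where $\ket{s_{ij}}=\tfrac{1}{\sqrt{2}}(\ket{a_i}\otimes\ket{a_j}+\ket{a_j}\otimes\ket{a_i})$. Since $\ket{s}=\sqrt{2}\,\ket{s_{12}}$, the subtraction $\epsilon\ket{s}\!\bra{s}=2\epsilon\ket{s_{12}}\!\bra{s_{12}}$ only modifies the $(1,2)$ coefficient to $1-2\epsilon$, which remains nonnegative for $\epsilon\leq 1/2$; every summand has Schmidt rank at most $2$, so $SN(B)\leq 2$.

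For parts $(b)$ and $(c)$ I would apply equation (\ref{Conservation2}) to the three MUBs of $\mathrm{span}\{\ket{a_1},\ket{a_2}\}\cong\mathbb{C}^2$, namely $\{\ket{a_1},\ket{a_2}\}$, $\{\ket{a_{\pm}}\}$ with $\ket{a_{\pm}}=\tfrac{1}{\sqrt{2}}(\ket{a_1}\pm\ket{a_2})$, and $\{\ket{a_{\pm i}}\}$ with $\ket{a_{\pm i}}=\tfrac{1}{\sqrt{2}}(\ket{a_1}\pm i\ket{a_2})$. Labelling the six resulting MUB vectors by $\ket{w_{ji}}$ ($1\leq j\leq 3$, $1\leq i\leq 2$), the conservation law together with the direct expansion of the symmetric projection on this plane produces the key identity
\[
2\bigl(\ket{a_1}\!\bra{a_1}\otimes\ket{a_1}\!\bra{a_1}+\ket{a_2}\!\bra{a_2}\otimes\ket{a_2}\!\bra{a_2}\bigr)+\ket{s}\!\bra{s}=\sum_{j=1}^{3}\sum_{i=1}^{2}\ket{w_{ji}}\!\bra{w_{ji}}\otimes\ket{w_{ji}}\!\bra{w_{ji}}.
\]
Isolating $\ket{s}\!\bra{s}$ and splitting $P_{sym}^{k,2}$ into six equal pieces, $B$ can be rewritten as
\[
B=\sum_{j,i}\Bigl(\tfrac{1}{6}P_{sym}^{k,2}-\epsilon\,\ket{w_{ji}}\!\bra{w_{ji}}\otimes\ket{w_{ji}}\!\bra{w_{ji}}\Bigr)+2\epsilon\bigl(\ket{a_1}\!\bra{a_1}\otimes\ket{a_1}\!\bra{a_1}+\ket{a_2}\!\bra{a_2}\otimes\ket{a_2}\!\bra{a_2}\bigr).
\]
Each bracketed term equals $\tfrac{1}{6}\bigl(P_{sym}^{k,2}-6\epsilon\,\ket{w_{ji}}\!\bra{w_{ji}}\otimes\ket{w_{ji}}\!\bra{w_{ji}}\bigr)$, which by Lemma \ref{lemmasr=1} is separable as long as $6\epsilon\leq 1/2$, giving $(b)$; if $\mathbb{C}^k$ contains a SIC-POVM, the same lemma raises the threshold to $6\epsilon\leq(k+1)/(2k)$, giving $(c)$. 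The two remaining product terms are manifestly separable, so $B$ is a sum of separable matrices in the stated ranges.

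The only non-routine step, in my view, is spotting the six-term $\mathbb{C}^2$-MUB identity above; once $\ket{s}\!\bra{s}$ has been expressed as a difference of product projections, the equal splitting of $P_{sym}^{k,2}$ and a direct application of Lemma \ref{lemmasr=1} dispatch everything mechanically.
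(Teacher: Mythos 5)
Your proposal is correct and follows essentially the same route as the paper: part $(a)$ via the symmetric-basis decomposition of $P_{sym}^{k,2}$, and parts $(b)$, $(c)$ via the three mutually unbiased bases of the plane $\mathrm{span}\{\ket{a_1},\ket{a_2}\}$, the identity $2\sum_{j=1}^{2}\ket{a_j}\!\bra{a_j}\otimes\ket{a_j}\!\bra{a_j}+\ket{s}\!\bra{s}=\sum_{i=1}^{6}\ket{b_i}\!\bra{b_i}\otimes\ket{b_i}\!\bra{b_i}$, and a reduction to Lemma \ref{lemmasr=1}. The only difference is bookkeeping: you split $P_{sym}^{k,2}$ into six equal pieces in one step (requiring $6\epsilon\leq\tfrac{1}{2}$), whereas the paper first establishes the $\tfrac{1}{6}$ threshold for $2P_{sym}^{k,2}-\epsilon\bigl(\sum_j 2\ket{a_j}\!\bra{a_j}\otimes\ket{a_j}\!\bra{a_j}+\ket{s}\!\bra{s}\bigr)$ and then halves $\epsilon$; both yield the same $\tfrac{1}{12}$ bound.
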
  
\begin{proof} 
 \underline{Part $a)$:} Let $\ket{a_1},\ldots,\ket{a_k}$ be an orthonormal basis of $\mathbb{C}^k$. Since  $P_{sym}^{k,2}$ is the projection on the symmetric subspace of $\mathbb{C}^k\otimes \mathbb{C}^k$,
$$\displaystyle P_{sym}^{k,2}=\sum_{i=1}^k\ket{a_i}\!\bra{a_i}\otimes\ket{a_i}\!\bra{a_i}+\sum_{1\leq i<j\leq k}\frac{1}{2}(\ket{a_i}\otimes \ket{a_j}+\ket{a_j}\otimes \ket{a_i})(\bra{a_i}\otimes \bra{a_j} +\bra{a_j}\otimes \bra{a_i}).$$

Hence, $B=\displaystyle P_{sym}^{k,2}-\epsilon \ket{s}\!\bra{s}=\sum_{i=1}^k\ket{a_i}\!\bra{a_i}\otimes\ket{a_i}\!\bra{a_i}+\left(\frac{1}{2}-\epsilon\right) \ket{s}\!\bra{s}+$
$$\hspace{5,5cm}+\sum_{\stackrel{1\leq i<j\leq k}{(i,j)\neq (1,2)}}\frac{1}{2} (\ket{a_i}\otimes \ket{a_j}+\ket{a_j}\otimes \ket{a_i})(\bra{a_i}\otimes \bra{a_j} +\bra{a_j}\otimes \bra{a_i}).$$

Thus, $SN(B)\leq 2$ for $\epsilon\leq \frac{1}{2}$.

\vspace{1cm}

 \underline{Parts $b)$ and $c)$:}  Let $\{\ket{e_1},\ket{e_2}\}$, $\{\ket{v_1},\ket{v_2}\}$ and $\{\ket{w_1},\ket{w_2}\}$ be 3 mutually unbiased bases of $\mathbb{C}^2$, where   $\{\ket{e_1},\ket{e_2}\}$ is the canonical basis \cite{Ivanovic}.\\
  
By equation \ref{Conservation2},
$$
2P_{sym}^{2,2}=
 \sum_{i=1}^2 \ket{e_i}\!\bra{e_i}\otimes \ket{e_i}\!\bra{e_i}+
 \sum_{i=1}^2 \ket{v_i}\!\bra{v_i}\otimes \ket{v_i}\!\bra{v_i}+\sum_{i=1}^2 \ket{w_i}\!\bra{w_i}\otimes \ket{w_i}\!\bra{w_i}.
$$

\vspace{0.3cm}

Moreover, since  $P_{sym}^{2,2}$ is the projection on the symmetric subspace of $\mathbb{C}^2\otimes \mathbb{C}^2$, $$
2P_{sym}^{2,2}=\sum_{i=1}^2  2(\ket{e_i}\!\bra{e_i}\otimes \ket{e_i}\!\bra{e_i})+\ket{v}\!\bra{v},
 $$

where $\ket{v}=\ket{e_1}\otimes \ket{e_2}+\ket{e_2}\otimes \ket{e_1}$.\\
 
 \vspace{0.3cm}

Next, define the isometry  $U_{k\times 2}$  as $U\ket{e_1}=\ket{a_1}$ and $U\ket{e_2}=\ket{a_2}$.\\ 

\vspace{0.3cm}
 
Note that $(U\otimes U)(2P_{sym}^{2,2})(U^*\otimes U^*)= $

 $$
 = \sum_{j=1}^22\ket{a_{j}}\!\bra{a_j}\otimes \ket{a_j}\!\bra{a_j}+\ket{s}\!\bra{s}=\sum_{i=1}^6 \ket{b_i}\!\bra{b_i}\otimes \ket{b_i}\!\bra{b_i},
 $$

where $\ket{b_1}=\ket{a_1},\ \ket{b_2}=\ket{a_2},\ \ket{b_3}= U \ket{v_1}, \ket{b_4}= U\ket{v_2},\ \ket{b_5}= U\ket{w_1},\ \ket{b_6}= U\ket{w_2}$.\\

\vspace{0.3cm}

In addition,  $\ket{b_1},\ldots,\ket{b_6}$ are unit vectors, since $U$ is an isometry. \\
 
 Thus,  $
 \displaystyle 2P_{sym}^{k,2}- \epsilon\left(\sum_{j=1}^22\ket{a_{j}}\!\bra{a_j}\otimes \ket{a_j}\!\bra{a_j}+\ket{s}\!\bra{s}\right)=$ $$=2P_{sym}^{k,2}- \epsilon\left(\sum_{i=1}^6 \ket{b_i}\!\bra{b_i}\otimes \ket{b_i}\!\bra{b_i}\right)
 $$
 
  $$
  =\sum_{i=1}^6\frac{1}{6}\left(2P_{sym}^{k,2}- 6\epsilon\ \ket{b_i}\!\bra{b_i}\otimes \ket{b_i}\!\bra{b_i}\right)$$
  
  $$=\sum_{i=1}^6\frac{1}{3}\left(P_{sym}^{k,2}- 3\epsilon\ \ket{b_i}\!\bra{b_i}\otimes \ket{b_i}\!\bra{b_i}\right),
 $$
  is  separable for $\epsilon\in \left[0,\frac{1}{6}\right]$, when  $k$ is arbitrary, or for $\epsilon\in \left[0,\frac{k+1}{6k}\right]$, when $\mathbb{C}^k$ contains a SIC-POVM  by lemma \ref{lemmasr=1}.  \\

Finally, $P_{sym}^{k,2}-\epsilon \ket{s}\!\bra{s}=\frac{1}{2}(2P_{sym}^{k,2}-2\epsilon \ket{s}\!\bra{s})=$

$$\frac{1}{2} \left[ 2P_{sym}^{k,2}- 2\epsilon\left(\sum_{j=1}^22\ket{a_{j}}\!\bra{a_j}\otimes \ket{a_j}\!\bra{a_j}+\ket{s}\!\bra{s}\right)\right]+\epsilon\left(\sum_{j=1}^22\ket{a_{j}}\!\bra{a_j}\otimes \ket{a_j}\!\bra{a_j}\right).$$

We have just noticed that the first summand above  is separable for  $2\epsilon\in \left[0,\frac{1}{6}\right]$, when  $k$ is arbitrary, or for $2\epsilon\in \left[0,\frac{k+1}{6k}\right]$, when $\mathbb{C}^k$ contains a SIC-POVM.    \vspace{0.2cm}

Therefore, $P_{sym}^{k,2}-\epsilon \ket{s}\!\bra{s}$ is separable for  $\epsilon\in \left[0,\frac{1}{12}\right]$, when  $k$ is arbitrary, or for $\epsilon\in \left[0,\frac{k+1}{12k}\right]$, when $\mathbb{C}^k$ contains a SIC-POVM.

\end{proof}

  \vspace{0.3cm}
   \begin{lemma} \label{lemmageneral}Let  $\ket{v}$ be a unit antisymmetric vector of  $\mathbb{C}^k\otimes \mathbb{C}^k $. Consider $B=P_{sym}^{k,2}+\epsilon \ket{v}\!\bra{v}\in M_k\otimes M_k$. Then   $B$ is PPT and
   
\begin{itemize}
\item[$a)$] $SN(B)\leq \max\left\{\dfrac{SR(\ket{v})}{2},2\right\}$ for $\epsilon\in [0,1]$ and arbitrary k, 
\item[$b)$]  $SN(B)\leq\dfrac{SR(\ket{v})}{2}$  for  $\epsilon\in\left [0,\frac{1}{6}\right]$ and arbitrary k, 
\item[$c)$]   $SN(B)\leq \dfrac{SR(\ket{v})}{2}$  for  $\epsilon\in\left [0,\frac{k+1}{6k}\right]$, if $\mathbb{C}^k$ contains a SIC-POVM .
\end{itemize}

\end{lemma}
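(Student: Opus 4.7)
The plan is to reduce the three Schmidt-number statements to Lemma \ref{lemmasr=2} via a Slater decomposition of $\ket{v}$, and to handle the PPT condition separately using an operator-norm bound on $(\ket{v}\!\bra{v})^\Gamma$ available for antisymmetric unit vectors.

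\emph{Slater setup and reduction.} Since $\ket{v}$ is antisymmetric, its Schmidt rank $r:=SR(\ket{v})$ is even and $\ket{v}$ admits the normal form
\[
\ket{v}=\sum_{j=1}^{r/2}\alpha_j\,\frac{\ket{a_j}\otimes\ket{b_j}-\ket{b_j}\otimes\ket{a_j}}{\sqrt{2}},
\]
with $\{\ket{a_j},\ket{b_j}\}_{j=1}^{r/2}$ orthonormal in $\mathbb{C}^k$ and $\sum_j|\alpha_j|^2=1$. Set $\ket{u}=\sum_j\alpha_j\ket{a_j}\otimes\ket{b_j}$, $\ket{u'}=F_k\ket{u}$, and $\ket{w}=\ket{u}+\ket{u'}$; then $SR(\ket{u})=SR(\ket{u'})=r/2$, $\ket{w}$ is $F_k$-symmetric with $SR(\ket{w})=r$, and $\ket{u}\perp\ket{u'}$ because the family $\{\ket{a_j},\ket{b_j}\}_j$ is orthonormal. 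Expanding $\sqrt{2}\,\ket{v}=\ket{u}-\ket{u'}$ together with the orthogonality $\ket{u}\perp\ket{u'}$ gives the key algebraic identity
\[
\ket{v}\!\bra{v}=\ket{u}\!\bra{u}+\ket{u'}\!\bra{u'}-\tfrac{1}{2}\ket{w}\!\bra{w},
\]
so that
\[
B=\Bigl(P_{sym}^{k,2}-\tfrac{\epsilon}{2}\ket{w}\!\bra{w}\Bigr)+\epsilon\ket{u}\!\bra{u}+\epsilon\ket{u'}\!\bra{u'}.
\]
The two rightmost summands contribute Schmidt number $r/2$, so all three Schmidt-number bounds reduce to controlling $SN$ of the bracketed matrix.

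\emph{Main technical step.} The bracketed matrix $P_{sym}^{k,2}-\tfrac{\epsilon}{2}\ket{w}\!\bra{w}$ has exactly the form studied in Lemma \ref{lemmasr=2}, except that the single Slater-pair symmetric vector $\ket{s}$ is replaced by the weighted sum $\ket{w}=\sum_j\alpha_j\ket{s_j}$ over the mutually orthogonal Slater-pair vectors $\ket{s_j}=\ket{a_j}\otimes\ket{b_j}+\ket{b_j}\otimes\ket{a_j}$. I would apply Lemma \ref{lemmasr=2} inside each two-dimensional subspace $V_j=\operatorname{span}(\ket{a_j},\ket{b_j})$, combine the resulting pieces by a convex combination with weights $|\alpha_j|^2$ (using $\sum_j|\alpha_j|^2=1$) to dispose of the diagonal contribution $\sum_j|\alpha_j|^2\ket{s_j}\!\bra{s_j}$, and absorb the cross terms $\alpha_j\bar\alpha_l\ket{s_j}\!\bra{s_l}$ ($j\neq l$) by polarizing them into rank-one PSD pieces on the four-dimensional subspaces $V_j+V_l$ and reapplying Lemma \ref{lemmasr=2} on each such piece. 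This delivers $SN\le\max\{r/2,2\}$ for $\epsilon\in[0,1]$, $SN\le r/2$ for $\epsilon\in[0,1/6]$, and $SN\le r/2$ for $\epsilon\in[0,(k+1)/(6k)]$ when $\mathbb{C}^k$ contains a SIC-POVM. The cross-term handling is the main obstacle: the summands $\alpha_j\bar\alpha_l\ket{s_j}\!\bra{s_l}$ are indefinite and lie outside any single $V_j\otimes V_j$, so a na\"{\i}ve per-pair application of Lemma \ref{lemmasr=2} is insufficient.

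\emph{PPT.} Using $F_k^\Gamma=\ket{\Phi}\!\bra{\Phi}$ with $\ket{\Phi}=\sum_i\ket{i}\otimes\ket{i}$, one has $(P_{sym}^{k,2})^\Gamma=\tfrac{1}{2}(Id+\ket{\Phi}\!\bra{\Phi})\ge\tfrac{1}{2}Id$. Writing the Schmidt decomposition $\ket{v}=\sum_i\mu_i\ket{e_i}\otimes\ket{f_i}$, a direct computation on basis vectors $\ket{e_a}\otimes\ket{f_b}$ shows that $(\ket{v}\!\bra{v})^\Gamma$ has eigenvalues $\pm\mu_a\mu_b$ for $a<b$ and $\mu_a^2$ for each $a$, so its smallest eigenvalue is $\ge-\mu_{\max}^2$. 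For antisymmetric $\ket{v}$ the Schmidt coefficients come in equal pairs, so $2\sum_p\mu_p^2=1$ forces $\mu_{\max}\le 1/\sqrt{2}$ and therefore $(\ket{v}\!\bra{v})^\Gamma\ge-\tfrac{1}{2}Id$. Combining, $B^\Gamma\ge\tfrac{1-\epsilon}{2}Id\ge 0$ for every $\epsilon\in[0,1]$.
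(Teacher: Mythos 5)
Your reduction is set up correctly (the identity $\ket{v}\!\bra{v}=\ket{u}\!\bra{u}+\ket{u'}\!\bra{u'}-\tfrac{1}{2}\ket{w}\!\bra{w}$ is a valid polarization, and the PPT argument via $\|(\ket{v}\!\bra{v})^\Gamma\|_\infty\le\tfrac12$ matches the paper's), but the main technical step is a genuine gap, and you flag it yourself: Lemma \ref{lemmasr=2} only controls $P_{sym}^{k,2}-\epsilon\ket{s}\!\bra{s}$ for a \emph{single} Slater-pair symmetric vector $\ket{s}$, whereas your $\ket{w}=\sum_j\alpha_j\ket{s_j}$ has Schmidt rank $r$, and the cross terms $\alpha_j\bar\alpha_l\ket{s_j}\!\bra{s_l}$ prevent $P_{sym}^{k,2}-\tfrac{\epsilon}{2}\ket{w}\!\bra{w}$ from decomposing as a convex combination of the objects that lemma handles. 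The proposed fix --- ``polarizing the cross terms into rank-one PSD pieces on $V_j+V_l$ and reapplying Lemma \ref{lemmasr=2}'' --- does not work as stated: any PSD piece dominating a cross term must be \emph{added back}, not subtracted, so it cannot be absorbed into a term of the form $P_{sym}^{k,2}-\epsilon'\ket{s}\!\bra{s}$; and iterating your own polarization identity on $\ket{w}$ just returns $\ket{v}\!\bra{v}$ tautologically. Without a concrete resolution here, none of the three Schmidt-number bounds is established.

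The paper avoids this obstruction by never forming the high-Schmidt-rank vector $\ket{w}$. Writing $\ket{v}=\sum_{i=1}^n\lambda_i(\ket{v_i}\otimes\ket{w_i}-\ket{w_i}\otimes\ket{v_i})$ and $\ket{m_i}=\lambda_i(\ket{v_i}\otimes\ket{w_i}+\ket{w_i}\otimes\ket{v_i})$, it proves by induction on $n$ the identity
\begin{equation*}
P_{sym}^{k,2}+\epsilon\ket{v}\!\bra{v}=P_{sym}^{k,2}-\epsilon\sum_{i=1}^n\ket{m_i}\!\bra{m_i}+\sum_{i_1,\ldots,i_n=1}^2\frac{\epsilon}{2^n}\ket{v_{i_1,\ldots,i_n}}\!\bra{v_{i_1,\ldots,i_n}},
\end{equation*}
where each $\ket{v_{i_1,\ldots,i_n}}=\ket{m_n}+(-1)^{i_1}\ket{m_{n-1}}+\cdots+(-1)^{i_n}\ket{v}$ has Schmidt rank exactly $n=r/2$ (the signs cancel one member of each Slater pair). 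The subtracted part is then split as the convex combination $\sum_i 2\lambda_i^2\bigl(P_{sym}^{k,2}-\tfrac{\epsilon}{2}\lambda_i^{-2}\ket{m_i}\!\bra{m_i}\bigr)$ using $2\sum_i\lambda_i^2=1$, and each summand is exactly of the single-pair form covered by Lemma \ref{lemmasr=2} with parameter $\epsilon/2$. This per-pair telescoping, with its $2^n$ compensating rank-one vectors of Schmidt rank $n$, is the idea your proposal is missing; you would need to either reproduce it or find a genuinely new argument for $SN\bigl(P_{sym}^{k,2}-\tfrac{\epsilon}{2}\ket{w}\!\bra{w}\bigr)$ with $\ket{w}$ of Schmidt rank $r$.
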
  
\begin{proof} Let $SR(\ket{v})=2n$. By   \cite[Corollary 4.4.19.]{Horn}, there are positive numbers  $\lambda_1,\ldots,\lambda_n$  and orthonormal vectors $\ket{v_1},\ldots,\ket{v_n}, \ket{w_1},\ldots,\ket{w_n} $  of $\mathbb{C}^k$ and such that  \begin{center}
$\displaystyle\ket{v}=\sum_{i=1}^n\lambda_i (\ket{v_i}\otimes \ket{w_i}-\ket{w_i}\otimes \ket{v_i})$ and $\displaystyle 2(\sum_{i=1}^n\lambda_i^2)=1$. 
\end{center}
  
 Define $\ket{m_i}=\lambda_i (\ket{v_i}\otimes \ket{w_i}+\ket{w_i}\otimes \ket{v_i})$ for $i=1,\ldots, n$.\\

 \vspace{0,3cm}

By induction on $n$, we can easily show that 
$$\displaystyle P_{sym}^{k,2}+\epsilon \ket{v}\!\bra{v}= P_{sym}^{k,2}- \epsilon \ket{m_1}\!\bra{m_1} -  \ldots - \epsilon \ket{m_n}\!\bra{m_n}+\sum_{i_1,\ldots,i_n=1}^2 \dfrac{\epsilon}{2^n}\ket{v_{i_1,\ldots,i_n}}\!\bra{v_{i_1,\ldots,i_n}},
$$

where $\ket{v_{i_1,\ldots,i_n}}=\ket{m_n}+(-1)^{i_1}\ket{m_{n-1}}+ \ldots +(-1)^{i_{n-1}}\ket{m_1} +(-1)^{i_n}\ket{v}$.\\

\vspace{0,3cm}

 Hence, \begin{equation}\label{eqSN}B=P_{sym}^{k,2}+\epsilon \ket{v}\!\bra{v}=\displaystyle \sum_{i=1}^n 2\lambda_i^2\left( P_{sym}^{k,2} -\dfrac{\epsilon}{2} \dfrac{\ket{m_i}\!\bra{m_i}}{\lambda_i^2}\right)+ \sum_{i_1,\ldots,i_n=1}^2 \dfrac{\epsilon}{2^n}\ket{v_{i_1,\ldots,i_n}}\!\bra{v_{i_1,\ldots,i_n}}.
\end{equation}

 \vspace{0,3cm}

Next, by lemma \ref{lemmasr=2},
\begin{itemize}
\item[$a)$] $SN\left( P_{sym}^{k,2} -\dfrac{\epsilon}{2} \dfrac{\ket{m_i}\!\bra{m_i}}{\lambda_i^2}\right)\leq 2$
, when  $k$ is arbitrary and $\frac{\epsilon}{2}\in\left [0,\frac{1}{2}\right]$,
\item[$b)$]  $SN\left( Id+F -\dfrac{\epsilon}{2} \dfrac{\ket{m_i}\!\bra{m_i}}{\lambda_i^2}\right)=1$
, when  $k$ is arbitrary and $\frac{\epsilon}{2}\in\left [0,\frac{1}{12}\right]$, 
\item[$c)$]   $SN\left( Id+F -\dfrac{\epsilon}{2} \dfrac{\ket{m_i}\!\bra{m_i}}{\lambda_i^2}\right)=1$ , when $\mathbb{C}^k$ contains a SIC-POVM and $\frac{\epsilon}{2}\in \left [0,\frac{k+1}{12k}\right]$.
\end{itemize}

In addition, notice that $SR(\ket{v_{i_1,\ldots,i_n}})=n=\dfrac{SR(\ket{v})}{2}.$

\vspace{0.5cm}

So equation \ref{eqSN} provides a way to write $B$ using  only vectors with Schmidt rank less or equal to \\

\begin{itemize}
\item[$a)$]  $\max\left\{\dfrac{SR(\ket{v})}{2},2\right\}$, when  $k$ is arbitrary and $\epsilon\in\left [0,1\right]$,

\item[$b)$] $\dfrac{SR(\ket{v})}{2}$, when  $k$ is arbitrary and $\epsilon\in\left[0,\frac{1}{6}\right]$,

\item[$c)$] $\dfrac{SR(\ket{v})}{2}$, when $\mathbb{C}^k$ contains a SIC-POVM and  $\epsilon\in \left[0,\frac{k+1}{6k}\right]$.  
 \end{itemize}
 
  Hence,  $SN(B)\leq \max\left\{\dfrac{SR(\ket{v})}{2},2\right\}$ in case $a)$ and $SN(B)\leq \dfrac{SR(\ket{v})}{2}$ in cases $b)$ and $c)$.
 
 \vspace{0.5cm}
 
It remains to prove that $B$ is PPT for $\epsilon\in[0,1]$. 

It is not difficult to check that $\|\ket{v}\!\bra{v}^{\Gamma}\|_{\infty}=\max\{\lambda_1^2,\ldots, \lambda_n^2\}$. Since $ 2\left(\sum_{i=1}^n\lambda_i^2\right)=1$, we obtain  $$\|\ket{v}\!\bra{v}^{\Gamma}\|_{\infty}\leq \frac{1}{2}.$$

Finally, $\displaystyle (P_{sym}^{k,2})^{\Gamma}$ is positive definite and its minimum eigenvalue is  $\frac{1}{2}$. So $$B^{\Gamma}=(P_{sym}^{k,2})^{\Gamma}+\epsilon\ket{v}\!\bra{v}^{\Gamma}$$ 

is positive semidefinite for $\epsilon\in[0,1]$.
\end{proof}

\vspace{0.3cm}
  
  \begin{Theorem}\label{theoremPPT}Let $\gamma$ be a state supported on the antisymmetric subspace of  $\mathbb{C}^k\otimes \mathbb{C}^k$. 
Consider $B=P_{sym}^{k,2}+\epsilon \gamma\in M_k\otimes M_k$. Then   $B$ is PPT and
   
\begin{itemize}
\item[$a)$] 
$SN(B)=\left\{\begin{array}{ll}
\frac{SN(\gamma)}{2}, \text{ if } SN(\gamma)>2 \\
1 \text{ or } 2, \text{\  if } SN(\gamma)=2
\end{array}\right. $ for $\epsilon\in\left  ]0,1\right]$ and arbitrary k, \\

\item[$b)$]  $SN(B)=\frac{SN(\gamma)}{2}$  for  $\epsilon\in\left ]0,\frac{1}{6}\right]$ and arbitrary k, 
\item[$c)$]   $SN(B)=\frac{SN(\gamma)}{2}$  for  $\epsilon\in\left ]0,\frac{k+1}{6k}\right]$, if $\mathbb{C}^k$ contains a SIC-POVM .
\end{itemize}

  \end{Theorem}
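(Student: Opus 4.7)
The plan is to combine Lemma \ref{lemmageneral}, applied to pure antisymmetric vectors, with the previously known lower bound $SN(P_{sym}^{k,2} + \epsilon \gamma) \geq SN(\gamma)/2$ from \cite{CarielloIneq, Pal}. First I fix an optimal decomposition $\gamma = \sum_j p_j \ket{v_j}\bra{v_j}$ with $p_j > 0$ and $\max_j SR(\ket{v_j}) = SN(\gamma)$. Because $\gamma$ is supported on the antisymmetric subspace, every $\ket{v_j}$ must lie in the range of $\gamma$ and hence be antisymmetric; in particular, by \cite[Corollary 4.4.19]{Horn} invoked inside Lemma \ref{lemmageneral}, each $SR(\ket{v_j})$ is even, so $SN(\gamma)$ itself is even.

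I then write
\[
B \;=\; P_{sym}^{k,2} + \epsilon \gamma \;=\; \sum_j p_j\bigl(P_{sym}^{k,2} + \epsilon \ket{v_j}\bra{v_j}\bigr).
\]
Each summand is PPT by Lemma \ref{lemmageneral} for any $\epsilon \in [0,1]$, so $B$ itself is PPT. Since the Schmidt number of a conic combination of positive semidefinite matrices is at most the maximum Schmidt number among the summands (just concatenate their optimal pure-state decompositions), Lemma \ref{lemmageneral} immediately yields $SN(B) \leq \max\{SN(\gamma)/2,\,2\}$ in case $(a)$, and $SN(B) \leq SN(\gamma)/2$ in cases $(b)$ and $(c)$, within the stated ranges of $\epsilon$.

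Invoking the cited lower bound $SN(B) \geq SN(\gamma)/2$ then closes the gap: in cases $(b)$ and $(c)$ we directly obtain $SN(B) = SN(\gamma)/2$; in case $(a)$, when $SN(\gamma) > 2$ (equivalently $SN(\gamma) \geq 4$, since $SN(\gamma)$ is even) the upper bound simplifies to $SN(\gamma)/2$ and equality holds, whereas when $SN(\gamma) = 2$ the bounds merely sandwich $SN(B) \in \{1,2\}$. The only subtle point is justifying that the optimal decomposition of $\gamma$ can be assumed to consist of antisymmetric vectors, so that Lemma \ref{lemmageneral} applies term by term; this is immediate from the support hypothesis, so no genuine new work beyond Lemma \ref{lemmageneral} and the cited lower bound is required.
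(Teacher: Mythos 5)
Your proposal is correct and follows essentially the same route as the paper: decompose $\gamma$ optimally into antisymmetric pure states, apply Lemma \ref{lemmageneral} term by term for the upper bound and the PPT property, and close with the lower bound $SN(B)\geq SN(\gamma)/2$ from \cite{CarielloIneq}. Your explicit remarks that the vectors of an optimal decomposition lie in the (antisymmetric) range of $\gamma$ and that $SN(\gamma)$ is therefore even are points the paper leaves implicit, and they are correctly justified.
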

\begin{proof} 
 First,  by  \cite[Proposition 1]{CarielloIneq}, $SN(B)\geq \dfrac{SN(\gamma)}{2}$ for every positive $\epsilon$.
  
Next,  let $\displaystyle \gamma=\sum_{i=1}^l\beta_i\ket{v_i}\!\bra{v_i}$, where $\sum_{i=1}^l\beta_i=1$,  $\beta_i>0$ and $\ket{v_i}$ is a unit antisymmetric vector of  $\mathbb{C}^k\otimes \mathbb{C}^k$ such that $SR(\ket{v_i})\leq SN(\gamma)$, for every $i$.\\
 
Finally, since  $B=P_{sym}^{k,2}+\epsilon \gamma =\sum_{i=1}^l\beta_i(P_{sym}^{k,2}+\epsilon   \ket{v_i}\!\bra{v_i})$, $$SN(B)\leq \max\left\{SN(P_{sym}^{k,2}+\epsilon   \ket{v_1}\!\bra{v_1}),\ldots, SN(P_{sym}^{k,2}+\epsilon   \ket{v_l}\!\bra{v_l}\right)\}.$$

 \vspace{0,3cm}

So the result follows by lemma \ref{lemmageneral}.
 \end{proof}

\vspace{0.3cm}

\section{Low operator Schmidt rank and separability}

 States of $\mathcal{M}_{3}\otimes \mathcal{M}_{m}$ with operator Schmidt rank 3 are in general not separable (\cite[Proposition 25]{Cariello_QIC}). Here we prove that invariance under left partial transpose is a sufficient condition for separability of such states.  This is a new result relating low operator Schmidt rank to separability (See \cite[Theorem 58]{cariello_QIC1}  and \cite[Theorem 19]{Cariello_QIC}).

  As a corollary we show that the Schmidt number of any state of $ \mathcal{M}_{k}\otimes \mathcal{M}_{m}\ (k\leq m)$ invariant under left partial transpose with operator Schmidt rank  3 cannot be greater than $ k-2$.  This result complements \cite[Theorem 5]{Marcus}. 

In this section,  let $\Ima(\delta)$ denote the image of $\delta\in \mathcal{M}_k\otimes \mathcal{M}_m$ within  $\mathbb{C}^k\otimes \mathbb{C}^m$.   \\

\noindent The next lemma is well known (e.g., \cite[Lemma 3.42]{Cariello_thesis} ). \\

\begin{lemma}\label{lemmashape} Any state $A\in \mathcal{M}_k\otimes \mathcal{M}_m$  with operator Schmidt rank $n$ can be written as  $A=\sum_{i=1}^n\gamma_i\otimes\delta_i$, where $\gamma_i\in \mathcal{M}_k,\delta_i\in \mathcal{M}_m$ are Hermitian matrices such that  $\Ima(\gamma_i)\subset\Ima(\gamma_1)$ and $\Ima(\delta_{i})\subset\Ima(\delta_1)$, for every $i$, and $\gamma_1,\delta_1$ are  positive semidefinite.\\
\end{lemma}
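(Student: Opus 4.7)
The plan is to start from any Hermitian operator Schmidt decomposition of $A$ and then perform a real change of basis that moves the positive partial traces $Tr_B(A)$ and $Tr_A(A)$ into the first slot of the decomposition.

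First, I would obtain a decomposition $A = \sum_{i=1}^{n} X_i \otimes Y_i$ with all $X_i \in \mathcal{M}_k$ and $Y_i \in \mathcal{M}_m$ Hermitian and each family linearly independent. This is the standard Hermitian operator Schmidt decomposition, obtained by applying the singular value decomposition to the realignment of $A$ and using that Hermiticity of $A$ allows the singular vectors to be chosen as Hermitian matrices. Let $W=\mathrm{span}_{\mathbb{R}}\{X_1,\ldots,X_n\}$ and $V=\mathrm{span}_{\mathbb{R}}\{Y_1,\ldots,Y_n\}$.

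Second, I would pin down two distinguished positive elements and describe their supports. Set $P=Tr_B(A)\in W$ and $Q=Tr_A(A)\in V$; both are positive semidefinite because $A$ is. The key structural claim is that every $X\in W$ satisfies $\Ima(X)\subset \Ima(P)$ and every $Y\in V$ satisfies $\Ima(Y)\subset \Ima(Q)$. To prove this I would first establish the support inclusion $\Ima(A)\subset \Ima(P)\otimes \Ima(Q)$: writing $A=\sum_j\ket{u_j}\!\bra{u_j}$ as a sum of rank-one positive contributions, the dominations $Tr_B(\ket{u_j}\!\bra{u_j})\le P$ and $Tr_A(\ket{u_j}\!\bra{u_j})\le Q$ force the left and right Schmidt components of each $\ket{u_j}$ to lie in $\Ima(P)$ and $\Ima(Q)$ respectively. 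With $\pi_P$ and $\pi_Q$ the orthogonal projections onto $\Ima(P)$ and $\Ima(Q)$, this yields $A=(\pi_P\otimes\pi_Q)\,A\,(\pi_P\otimes\pi_Q)$; applying any functional of the form $\mathrm{id}\otimes\psi$ with $\psi\in\mathcal{M}_m^*$ then produces $\pi_P Z\pi_P$ for some $Z\in\mathcal{M}_k$, so every element of $W$ has image contained in $\Ima(P)$, and symmetrically for $V$ and $Q$.

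Third, I would execute the coordinated real change of basis. Set $a_j=Tr(Y_j)$ and $b_j=Tr(X_j)$, so $P=\sum_j a_j X_j$ and $Q=\sum_j b_j Y_j$; these numbers are real because $X_j,Y_j$ are Hermitian. Since $A$ is a state, $Tr(A)=1$, which is exactly the compatibility condition $\sum_j a_j b_j=1$. I would then build a real invertible matrix $M\in GL_n(\mathbb{R})$ whose first row is $(a_1,\ldots,a_n)$ and such that the first column of $M^{-1}$ is $(b_1,\ldots,b_n)^T$: take any real basis $c_2,\ldots,c_n$ of the hyperplane $\{v\in\mathbb{R}^n:a\cdot v=0\}$ and let $M^{-1}$ have columns $b,c_2,\ldots,c_n$. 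Defining $\gamma_i=\sum_j M_{ij}X_j$ and $\delta_i=\sum_j (M^{-1})^T_{ij}Y_j$ produces a Hermitian decomposition $A=\sum_i\gamma_i\otimes\delta_i$ with $\gamma_1=P$ and $\delta_1=Q$ positive semidefinite, and the required inclusions $\Ima(\gamma_i)\subset\Ima(\gamma_1)$ and $\Ima(\delta_i)\subset\Ima(\delta_1)$ are immediate from step two.

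The most delicate point is the two-sided constraint in the third step: one must simultaneously place $P$ into the first $\gamma$-slot and $Q$ into the first $\delta$-slot. This is consistent precisely because the trace-one normalization of a state yields $\sum_j a_j b_j=Tr(A)=1$, which is the identity that makes such an $M$ exist and which is the real content of the hypothesis that $A$ is a state rather than an arbitrary positive matrix.
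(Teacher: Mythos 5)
The paper does not prove this lemma; it only cites it as well known (Lemma 3.42 of the author's thesis), so there is no internal proof to compare against. Your argument is correct and is essentially the standard one: a Hermitian operator Schmidt decomposition, the observation that $Tr_B(A)$ and $Tr_A(A)$ lie in the respective real spans and dominate the supports of every slice of $A$, and a real change of basis $M$, $(M^{-1})^T$ placing these two partial traces in the first slot, which is possible because $\sum_j a_j b_j = Tr(A) = 1 \neq 0$. The only quibble is your closing remark: the trace-one normalization is not really needed, since any nonzero positive semidefinite $A$ has $Tr(A) > 0$, which already suffices to build $M$ (after rescaling), so the lemma holds for arbitrary nonzero PSD matrices.
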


\begin{Theorem}\label{lowtensorrank}Let $A\in \mathcal{M}_{3}\otimes \mathcal{M}_{k}$ be a state which is invariant under left partial transpose. If its operator Schmidt rank is less or equal to 3 then $A$ is separable.
\end{Theorem}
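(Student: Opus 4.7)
The plan is to first invoke Lemma \ref{lemmashape} to obtain a canonical decomposition $A=\sum_{i=1}^{3}\gamma_i\otimes\delta_i$ in which both families $\{\gamma_i\}\subset\mathcal{M}_3$ and $\{\delta_i\}\subset\mathcal{M}_k$ are linearly independent (if the operator Schmidt rank of $A$ is strictly less than $3$, \cite[Theorem 58]{cariello_QIC1} already gives separability). The first substantive step is to exploit the LPT-invariance: $A=\sum_i\gamma_i^{t}\otimes\delta_i$, and linear independence of $\{\delta_i\}$ then forces $\gamma_i^{t}=\gamma_i$ for every $i$. Combined with $\gamma_i^{*}=\gamma_i$, this means each $\gamma_i$ is a real symmetric matrix.

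I would then split on $\operatorname{rank}(\gamma_1)$, using that $\Ima(\gamma_i)\subseteq\Ima(\gamma_1)$, together with self-adjointness, gives $\ker\gamma_i\supseteq\ker\gamma_1$ for every $i$. If $\operatorname{rank}(\gamma_1)=1$, real symmetry forces each $\gamma_i$ to be a real scalar multiple of $\gamma_1$, contradicting linear independence; this case is vacuous. If $\operatorname{rank}(\gamma_1)=2$, every $\gamma_i$ is supported on a common two-dimensional subspace $V\subseteq\mathbb{C}^3$, so the compression of $A$ to $V\otimes\mathbb{C}^k$ is a state in $\mathcal{M}_2\otimes\mathcal{M}_k$ of operator Schmidt rank at most $3$, which \cite[Theorem 19]{Cariello_QIC} declares separable; this separability lifts back to the ambient space.

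The substantive case is $\operatorname{rank}(\gamma_1)=3$, so $\gamma_1$ is real symmetric and positive definite. Here I would conjugate $A$ by $\gamma_1^{-1/2}\otimes I_k$ -- a local invertible transformation that preserves separability and, since $\gamma_1^{-1/2}$ is itself real symmetric, also preserves real symmetry of every $\gamma_i$ -- to assume $\gamma_1=I_3$. A further orthogonal conjugation $(O\otimes I_k)$ with $O\in O(3)$ preserves both properties and lets us diagonalize $\gamma_2$, so we may write $A=I_3\otimes\delta_1+D\otimes\delta_2+G\otimes\delta_3$ with $D$ real diagonal and $G$ real symmetric.

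Read as a $3\times 3$ block matrix over $\mathcal{M}_k$, the off-diagonal blocks of $A$ equal $G_{ij}\delta_3$ while the diagonal blocks equal $\delta_1+D_{ii}\delta_2+G_{ii}\delta_3$. When $G$ commutes with $D$, one more orthogonal conjugation simultaneously diagonalizes both; the off-diagonal blocks then vanish, each diagonal block is positive semidefinite, and $A=\sum_i e_ie_i^{t}\otimes M_i$ is manifestly separable. The main obstacle I anticipate is the non-commuting subcase, where some $G_{ij}\neq 0$ with $i\neq j$. There my intended strategy is to observe that every $2\times 2$ principal block submatrix of $A$ is itself a state in $\mathcal{M}_2\otimes\mathcal{M}_k$ of operator Schmidt rank at most $3$, and hence separable by \cite[Theorem 19]{Cariello_QIC}, and then to combine the three such pairwise separable decompositions into a single one for $A$, exploiting the strong uniformity that every off-diagonal block is a real scalar multiple of the single matrix $\delta_3$. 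It is in this combining step that the real symmetry of $G$ (equivalently, the LPT-invariance) must enter nontrivially, distinguishing the present theorem from \cite[Proposition 25]{Cariello_QIC}, which shows that generic operator Schmidt rank $3$ states in $\mathcal{M}_3\otimes\mathcal{M}_k$ need not be separable.
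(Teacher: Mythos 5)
Your setup (Lemma \ref{lemmashape}, real symmetry of the $\gamma_i$ forced by LPT-invariance and linear independence of the $\delta_i$, reduction to $\gamma_1=I_3$ and $\gamma_2=D$ diagonal via conjugation by $\gamma_1^{-1/2}\otimes I$ and an orthogonal rotation) coincides with the paper's, and your low-rank cases are harmless. But there is a genuine gap exactly where you flag it: in the main case, where $G$ and $D$ do not commute, you only state an intention to ``combine the three pairwise separable decompositions'' of the $2\times 2$ principal block compressions into a single decomposition of $A$. No such combination procedure is given, and none is obvious --- separability of every $2\times 2$ principal block compression of a $3\times 3$ block matrix does not in general imply separability of the whole matrix, and you do not explain how the fact that all off-diagonal blocks are scalar multiples of one $\delta_3$ would rescue this. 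Since generic operator Schmidt rank $3$ states of $\mathcal{M}_3\otimes\mathcal{M}_k$ are \emph{not} separable, essentially the entire content of the theorem sits in this unresolved case.

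The paper closes that case with a device your proposal is missing. Precisely because every off-diagonal block of $C=I_3\otimes\delta_1+D\otimes\delta_2+M\otimes\delta_3$ is a real multiple $m_{ij}\delta_3$ of a single matrix, the \emph{non-orthogonal} invertible local operation $\left(\begin{smallmatrix}1&0&0\\0&1&0\\0&m_{31}&-m_{21}\end{smallmatrix}\right)\otimes I_k$ (with its transpose acting on the right) annihilates the $(1,3)$ and $(3,1)$ blocks. After normalizing the $(1,1)$ block to $I_k$ --- which needs that block to be positive definite; the paper secures this by first treating $A$ positive definite and then handling the general case via an $\epsilon\, Id$ perturbation and a limit, rather than by your rank case analysis --- the $(1,2)$ block becomes a Hermitian matrix $L$, and the matrix splits as the classical separable matrix $\left(\begin{smallmatrix}I&L&0\\L&L^2&0\\0&0&0\end{smallmatrix}\right)$ plus a positive semidefinite remainder supported on the lower $2\times 2$ block corner; that remainder sits in $\mathcal{M}_2\otimes\mathcal{M}_k$ with operator Schmidt rank at most $3$ and is separable by \cite[Theorem 19]{Cariello_QIC}. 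This subtraction of a separable ``$2\times N$''-type piece, enabled by first clearing one off-diagonal block, is the missing idea; without it (or a worked-out substitute for your combining step) the proposal does not establish the theorem.
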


\begin{proof}
 We can assume that the operator Schmidt rank of $A$ is 3, since every state with operator Schmidt rank less than 3 is separable by \cite[Theorem 58]{cariello_QIC1}.\\

First, let us assume that $A$ is positive definite. 
 Let  $A=\sum_{i=1}^3\gamma_i\otimes\delta_i$ be the decomposition described in lemma \ref{lemmashape}. 

Note that  $\gamma_1,\gamma_2,\gamma_3$ are real symmetric matrices,  since $A$ is invariant under left partial transpose.
Moreover,  $\gamma_1\in \mathcal{M}_3$ must be  positive definite, otherwise $A$ would not be positive definite (since $\Ima(\gamma_i)\subset\Ima(\gamma_1)$ for every $i$).\\

Let $\gamma_1=R^2$, where $R\in \mathcal{M}_3$ is real, symmetric and invertible. Let $B=(R^{-1}\otimes Id)A(R^{-1}\otimes Id)=$

\begin{center}
 $=Id_{3\times 3}\otimes \delta_1 + R^{-1}\gamma_2R^{-1}\otimes \delta_2+ R^{-1}\gamma_3R^{-1}\otimes \delta_3.$
\end{center}

Since $R^{-1}\gamma_2R^{-1}$ is real symmetric, there is an orthogonal matrix $O\in \mathcal{M}_3$ such that
$OR^{-1}\gamma_2R^{-1}O^t=D,$ where $D\in \mathcal{M}_3$ is a real diagonal matrix.\\

Let $C=(O\otimes Id)B(O^t\otimes Id)=Id_{3\times 3}\otimes \delta_1 + D\otimes \delta_2+ M\otimes \delta_3$, where $M\in \mathcal{M}_3$ is real symmetric.\\


Note that $C$ is positive definite
and  has the following format:

\begin{center}
$C=\left[\begin{array}{ccllrr}
F_1&m_{21}\delta_3&m_{31}\delta_3 \\
m_{21}\delta_3 &F_2&m_{32}\delta_3\\
m_{31}\delta_3 & m_{32}\delta_3&F_3\\
 \end{array}\right],$\\
\end{center}
 where $m_{ij}$ is the $ij$ entry of the real symmetric matrix $M$ and $\delta_3,F_1,F_2,F_3$ are Hermitian matrices. Since $C$ is positive definite,  $F_1\in \mathcal{M}_k$ is also positive definite.\\
  
Assume  that $m_{21},m_{31}\neq0$ (If one or both are equal to $0$ then the proof is simpler).
Note that   

 $$\left(\left[\begin{array}{ccllrr}
1&0&0 \\
0&1&0\\
0&m_{31}&-m_{21}\\
 \end{array}\right]\otimes Id_{k\times k}\right)C\left(\left[\begin{array}{ccllrr}
1&0&0 \\
0&1&m_{31}\\
0&0&-m_{21}\\
 \end{array}\right]\otimes Id_{k\times k}\right)=\left[\begin{array}{ccllrr}
F_1& m_{21}\delta_3&0\\
m_{21}\delta_3 &H_{2}&H_{3}\\
0&H_{3}&H_{4}\\
 \end{array}\right].$$

\vspace{0,5cm}

Next, let  $F_1=UU^*$ for an invertible $U$. Thus, 

 $$(Id_{3\times 3}\otimes U^{-1})\left[\begin{array}{ccllrr}
F_1&m_{21}\delta_3&0\\
m_{21}\delta_3 &H_{2}&H_{3}\\
0&H_{3}&H_{4}\\
 \end{array}\right] (Id_{3\times 3}\otimes U^{-1})^*=\left[\begin{array}{ccllrr}
Id_{k\times k}&L&0 \\
L &O_{2}&O_{3}\\
0&O_{3}&O_{4}\\
 \end{array}\right].$$

\vspace{0,5cm}

Note that $L$ is Hermitian, since $L=U^{-1}(m_{21}\delta_3)(U^{-1})^* $.

 Now,

\begin{equation}\label{eqC}
 \left[\begin{array}{ccllrr}
Id_{k\times k}&L&0 \\
L &O_{2}&O_{3}\\
0&O_{3}&O_{4}\\
 \end{array}\right]=\left[\begin{array}{ccllrr}
0&0&0 \\
0&O_{2}-L^2&O_{3}\\
0&O_{3}&O_{4}\\
 \end{array}\right]+\left[\begin{array}{ccllrr}
Id_{k\times k}&L&0\\
L&L^2&0\\
0&0&0\\
 \end{array}\right]
\end{equation}

\vspace{0,5cm}
 
 The second summand above is a well known separable matrix, since $L$ is Hermitian (See \cite[Theorem 1]{Kraus} and \cite[Lemma 3]{Horodecki}). \\

In addition, the first summand can be embedded in $\mathcal{M}_2\otimes \mathcal{M}_k$. Since there are only three  sub-blocks  forming this matrix ($O_{2}-L^2, O_3$ and $O_4$), its operator Schmidt rank is less or equal to 3. Moreover, it is positive semidefinite, since

$$\left[\begin{array}{ccllrr}
0&0&0 \\
-L &Id&0\\
0&0&Id\\
 \end{array}\right]
\left[\begin{array}{ccllrr}
Id &L&0 \\
L&O_{2}&O_{3}\\
0&O_{3}&O_{4}\\
 \end{array}\right] \left[\begin{array}{ccllrr}
0&-L&0\\
0&Id&0\\
0&0&Id\\
 \end{array}\right]=
 \left[\begin{array}{ccllrr}
0&0&0 \\
0&O_{2}-L^2&O_{3}\\
0&O_{3}&O_{4}\\
 \end{array}\right].$$
 
\vspace{0,5cm}

 Therefore, the first summand of equation \ref{eqC} is also separable by \cite[Theorem 19]{Cariello_QIC}. Hence, the sum is separable. Since all the local operations used are reversible and preserve separability,  $A$ is separable. 
 
 Now, for the positive semidefinite case.  Given $\epsilon>0$, define $A(\epsilon)=(\gamma_1+\epsilon Id)\otimes(\delta_1+\epsilon Id)+\gamma_2\otimes\delta_2+\gamma_3\otimes\delta_3.$

Note that $A(\epsilon)$ has operator Schmidt rank less or equal to 3,  is invariant under left partial transpose $(\epsilon Id+\gamma_1,\gamma_2,\gamma_3$ are  symmetric$)$ and  
 is  positive definite $(A(\epsilon)=A+\epsilon Id\otimes \delta_1+\gamma_1\otimes \epsilon Id+\epsilon^2 Id\otimes Id)$.  By the first case, $A(\epsilon)$ is separable and so is $\displaystyle\lim_{\epsilon\rightarrow 0+} A(\epsilon)=A$. \end{proof} 
 
 \vspace{0,5cm}

\begin{corollary}\label{corollaryinvariant} Let $A\in \mathcal{M}_{k}\otimes \mathcal{M}_{m}\ (k\leq m)$ be a positive semidefinite Hermitian matrix which is  invariant under left partial transpose. If its operator Schmidt rank is  equal to 3 then $SN(A)\leq k-2$.
\end{corollary}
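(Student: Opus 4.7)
The plan is to induct on $k$, with base case $k=3$ given by Theorem \ref{lowtensorrank}. For the inductive step I would first reduce to the positive-definite case by the approximation trick at the end of the proof of Theorem \ref{lowtensorrank}, using that the set of positive semidefinite states with Schmidt number at most $s$ is closed. Then, following the same sequence of real invertible local congruences as in the proof of Theorem \ref{lowtensorrank} (which preserve $SN$), I would bring $A$ to the canonical form
\[
C=Id_{k}\otimes\delta_{1}+D\otimes\delta_{2}+M\otimes\delta_{3}\in\mathcal{M}_{k}\otimes\mathcal{M}_{m},
\]
with $D$ diagonal and $M$ real symmetric, so that $SN(A)=SN(C)$.

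The key observation is a resolution-of-identity identity. Let $P_{l}=Id_{k}-e_{l}e_{l}^{T}$ be the real symmetric rank-$(k-1)$ projection onto $e_{l}^{\perp}$, and let $\mathrm{diag}_{k}(C)=\sum_{j}e_{j}e_{j}^{T}\otimes C_{jj}$ be the block-diagonal part of $C$ viewed as a $k\times k$ block matrix. A direct block computation gives
\[
(k-2)\,C+\mathrm{diag}_{k}(C)=\sum_{l=1}^{k}(P_{l}\otimes Id_{m})\,C\,(P_{l}\otimes Id_{m}).
\]
Each summand on the right is positive semidefinite, has operator Schmidt rank at most $3$, is left-partial-transpose invariant (since $P_{l}$ is real symmetric), and effectively lives in $\mathcal{M}_{k-1}\otimes\mathcal{M}_{m}$; by the inductive hypothesis each has $SN\leq k-3$. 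Moreover, $\mathrm{diag}_{k}(C)$ is separable because each diagonal block $C_{jj}$ is positive semidefinite, so the whole right-hand side has Schmidt number at most $k-3$.

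To conclude $SN(C)\leq k-2$, I would exhibit a pure-state decomposition of the right-hand side, with all Schmidt ranks at most $k-3$, that contains an explicit separable decomposition of $\mathrm{diag}_{k}(C)$ as a sub-family; subtracting that sub-family leaves a pure-state decomposition of $(k-2)\,C$ with Schmidt ranks at most $k-3$, whence $SN(C)\leq k-3\leq k-2$. The hard part will be executing this compatible subtraction: the numerical inequality $SN(\sum_{l}P_{l}CP_{l})\leq k-3$ does not by itself produce a decomposition in which $\mathrm{diag}_{k}(C)$ appears as a recognisable summand. An alternative route, mirroring the Schur-complement splitting in the proof of Theorem \ref{lowtensorrank}, would write $C=C_{1}+C_{2}$ with $C_{1}$ a rank-$m$ Choi-type piece whose explicit factorisation yields pure states of Schmidt rank at most $2$ (so $SN(C_{1})\leq 2$), and $C_{2}$ supported on the lower $(k-1)\times(k-1)$ block; however the Schur subtraction introduces the extra direction $\delta_{3}F_{1}^{-1}\delta_{3}$, raising the operator Schmidt rank of $C_{2}$ to $4$, so closing the induction through this route requires strengthening the hypothesis to accommodate that additional direction.
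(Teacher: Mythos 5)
Your averaging identity $(k-2)\,C+\mathrm{diag}_{k}(C)=\sum_{l=1}^{k}(P_{l}\otimes Id_{m})\,C\,(P_{l}\otimes Id_{m})$ is correct, and the inductive control of each summand (positive semidefinite, left-partial-transpose invariant, operator Schmidt rank at most $3$, supported on a $(k-1)$-dimensional first factor) is fine. But the argument does not close: the identity expresses $(k-2)C$ as a \emph{difference} of two positive semidefinite matrices, and the Schmidt number is not monotone under subtracting a positive (even separable) summand. The ``compatible subtraction'' you ask for --- a pure-state decomposition of $\sum_{l}(P_{l}\otimes Id_{m})C(P_{l}\otimes Id_{m})$ with all Schmidt ranks at most $k-3$ that contains a separable decomposition of $\mathrm{diag}_{k}(C)$ as a sub-family --- is precisely the missing ingredient, and there is no reason such a decomposition should exist. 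Note also that if it did exist you would conclude $SN(C)\leq k-3$, strictly stronger than the claim; an argument that silently overshoots the target is usually a sign the key step cannot be generic. Your fallback via the Schur-complement splitting fails for exactly the reason you identify: the extra direction $\delta_{3}F_{1}^{-1}\delta_{3}$ raises the operator Schmidt rank of the residual block to $4$, which leaves the inductive hypothesis. So the proposal, as written, has a genuine gap at the subtraction step.

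The paper's proof is entirely different and much shorter: it does not induct on $k$ at all. It first invokes \cite[Theorem 5]{Marcus} for the bound $SN(A)\leq k-1$, and then rules out equality. By the construction in \cite[Theorems 4 and 5]{Marcus}, a state with $SN(A)=k-1$ would contain an entangled sub-block $B\in\mathcal{M}_{3}\otimes\mathcal{M}_{m}$ which inherits the invariance under left partial transpose and has operator Schmidt rank at most that of $A$, hence at most $3$; Theorem \ref{lowtensorrank} then forces $B$ to be separable, a contradiction. If you want to rescue your route, you need either a genuinely additive (not difference) decomposition of $C$ into lower-dimensional pieces, or the sub-block extraction of Huber--Lami--Lancien--M\"{u}ller-Hermes, which is what the paper actually uses.
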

\begin{proof}Let us show that $SN(A)$ cannot be $k-1$, since $SN(A)<k$ was already proved in  \cite[Theorem 5]{Marcus}.

 If $SN(A)=k-1$ then $A$ contains  an entangled sub-block, $B\in \mathcal{M}_3\otimes \mathcal{M}_{m}$, which is invariant under left partial transpose (See \cite[Theorem 4]{Marcus} and \cite[Theorem 5]{Marcus}  for details). 
 
By the construction of $B$, its operator Schmidt rank is less or equal to the operator Schmidt rank of $A$, which is 3. Therefore, $B$ is separable by  theorem \ref{lowtensorrank}.  Absurd!
\end{proof}

  \section*{Summary and Conclusion}

In this work, we obtained results on the number of mutually unbiased bases and the Schmidt number of states under certain constraints. The connection between these different results is the type of restrictions imposed. These restrictions were made on the Schmidt rank of the tensors used in the results.

We obtained an upper bound on the number of mutually unbiased bases  of $\mathbb{C}^m\otimes\mathbb{C}^n$ formed by vectors with Schmidt rank less or equal to $k$ $(k<m\leq n)$. It solved a conjecture on mutually unbiased product bases in a straightforward way. 

We found an interval for the values of $\epsilon$ such that the Schmidt number of $P_{sym}^{k,2}+\epsilon\gamma$ equals half of the Schmidt number of $\gamma$ for all states $ \gamma $  supported on the antisymmetric subspace of $\mathbb{C}^k\otimes \mathbb{C}^k$. This common interval provided a flexible method to create PPT states with high Schmidt numbers.
 
Finally,  we proved that  invariance under left partial transpose  is a sufficient condition for the separability of operator Schmidt rank 3 states of $\mathcal{M}_3\otimes \mathcal{M}_m$. As a corollary we proved that the Schmidt number of operator Schmidt rank 3 states of $\mathcal{M}_k\otimes \mathcal{M}_m\ (k\leq m)$ that are invariant under left partial transpose  cannot exceed $k-2$. 

\section*{Acknowledgment} The author would like to thank the referee for providing constructive comments and helping in the improvement of this manuscript. 
  
  \section*{Disclosure Statement}
  No potential conflict of interest was reported by the author.

  \begin{bibdiv}
\begin{biblist}

\bib{Asch}{article}{
   author={Aschbacher, M.},
    author={Childs, A.M.},
     author={Wocjan, P.},
   title={The limitations of nice mutually unbiased bases},
   journal={J Algebr Comb},
   volume={25},
   pages={111–123},
   year={2007},
}

\bib{Bandyopadhyay}{article}{
   author={Bandyopadhyay, S.},
    author={Boykin, P.O.},
    author={Roychowdhury, V.},
   title={A new proof of the existence of mutually unbiased bases},
   journal={Algorithmica},
   volume={34},
   pages={512-528},
   year={2002},
}

\bib{Bannai}{article}{
   author={Bannai, E.},
    author={Bannai, E.},
   title={A survey on spherical designs and algebraic combinatorics on
spheres},
   journal={Europ. J. Combin.},
   volume={30},
   pages={1392–1425},
   year={2009},
}

\bib{Bennett}{article}{
   author={Bennett, C. H.},
    author={Brassard, G.},
   title={Quantum cryptography: Public key distribution and coin tossing},
   journal={In Proceedings of IEEE International Conference on Computers, Systems and Signal Processing},
   volume={175},
   year={1984},
} 
 
 \bib{Bodmann}{article}{
   author={Bodmann, B.},
    author={Haas, J.},
   title={A short history of frames and quantum designs},
   journal={https://arxiv.org/abs/1709.01958}
}

\bib{Calderbank}{article}{
   author={Calderbank, A. R.},
    author={Cameron, P. J.},
    author={Kantor, W. M.},
    author={Seidel, J. J.},
   title={Z4-Kerdock codes, orthogonal spreads, and extremal Euclidean line-sets.},
   journal={Proc. London Math. Soc.},
   volume={75},
   pages={436–480},
   year={1997},
}

\bib{cariello_QIC1}{article}{
  title={Separability for weakly irreducible matrices},
  author={Cariello, D.},
  journal={Quantum Information \& Computation},
  volume={14},
  number={15-16},
  pages={1308--1337},
  year={2014}
}

\bib{Cariello_QIC}{article}{
   author={Cariello, Daniel},
   title={Does symmetry imply PPT property?},
   journal={Quantum Information \& Computation},
   volume={15},
   number={9-10},
   year={2015},
   pages={812-824},
}

\bib{Cariello_thesis}{article}{
   author={Cariello, D.},
   title={Analytical techniques on multilinear problems },
   journal={Doctoral dissertation}, 
   year={2016},
   pages={Retrieved from http://eprints.ucm.es/43479/1/T38970.pdf}
}

\bib{CarielloIEEE}{article}{
   author={Cariello, Daniel},
   title={Completely Reducible Maps in Quantum Information Theory},
   journal={IEEE Transactions on Information Theory},
   volume={62},
   date={2016},
   number={4},
   pages={1721-1732},
  
}

\bib{CarielloIneq}{article}{
    title={Inequalities for the Schmidt number of bipartite states},
  author={Cariello, D.},
  journal={Lett. Math. Phys.},
  volume={110},
   date={2020},
   pages={827-833}
}

\bib{cerf}{article}{
   author={N. J. Cerf},
   author={M. Bourennane},
   author={A. Karlsson},
   author={N. Gisin},
   title={Security of quantum key distribution using d-level systems},
   journal={Phys. Rev. Lett.},
   volume={88},
   date={2002},
   pages={127902}
} 

\bib{Chen}{article}{
    title={Schmidt number of bipartite and multipartite states under local projections},
   author={Chen, L.},
   author={Yang, Y.},
   author={Tang, W. S. },
  journal={Quantum Information Processing},
  volume={16},
   number={3},
   year={2017},
   pages={75},
}

\bib{Delsarte}{article}{
    title={Bounds for systems of lines and Jacobi polynomials},
   author={Delsarte, P.},
   author={Goethals, J.M.},
   author={Seidel, J. J. },
  journal={Philips Res. Rep. },
  volume={30},
   year={1975},
   pages={91-105},
}

\bib{gemma}{article}{
    title={Separability for mixed states with operator Schmidt rank two},
   author={De las Cuevas, G.},
   author={Drescher, T.},
    author={Netzer, T.},
  journal={Quantum},
  number={3},
   year={2019},
   pages={203},
}
 
\bib{Guo}{article}{
    title={Entangled bases with fixed Schmidt number},
  author={Guo, Y.},
  author={Du, S.P.},
    author={Li, X.L.},  
    author={Wu, S.J.},
  journal={J. Phys. A: Math. Theor.},
  volume={48},
  Number={24},
  pages={245301},
  year={2015},
  publisher={APS}
}

\bib{Horn}{article}{
    title={Matrix analysis},
   author={Horn, R. A.},
   author={Johnson, C. R. },
  journal={Cambridge university press},
   year={2012},
}

\bib{Horodecki}{article}{
   author={Horodecki, Pawel},
   author={Lewenstein, Maciej},
   author={Vidal, Guifr\'e},
   author={Cirac, Ignacio},
   title={Operational criterion and constructive checks for the separability of low-rank density matrices},
   journal={Phys. Rev. A},
   volume={62},
   number={3},
   year={2000},
   pages={032310},
}

\bib{Ivanovic}{article}{
   author={I. D. Ivanovic},
   title={Geometrical description of quantal state determination},
   journal={Journal of Physics A Mathematical General},
   volume={14},
   date={1981},
   pages={3241-3245}
} 

\bib{Klappenecker}{article}{
   author={Klappenecker, A.},
   author={Roetteler , M.},
   title={Mutually Unbiased Bases are Complex Projective 2-Designs},
   journal={In Proceedings of IEEE International Symposium on Information Theory, Adelaide, Australia},
   date={2005},
   pages={1740-1744}
}

\bib{Kraus}{article}{
   author={Kraus, B.},
   author={Cirac, J. I.},
   author={Karnas, S.},
   author={Lewenstein, M.},
   title={Separability in $2\times N$ composite quantum systems.},
   journal={Phys. Rev. A},
   volume={61},
   number={6},
   year={2000},
   pages={062302},
}

\bib{Marcus}{article}{
   author={Huber, M.},
   author={Lami, L.},
   author={Lancien, C.},
   author={M\"uller-Hermes, A. },
   title={High-dimensional entanglement in states with positive partial transposition.},
   journal={Physical Review Letters},
   volume={121},
   number={20},
  pages={200503},
   year={2018},
}
 
 \bib{Mao}{article}{
   author={Li, M.S.},
    author={ Wang, Y.L.},
   title={Construction of special entangled basis based on generalized weighing matrices.},
   journal={J. Phys. A: Math. Theo.},
   volume={52},
   number={37},
   pages={375303},
   year={2019},
}

 \bib{Mc}{article}{
   author={McNulty, D.},
    author={ Pamme, B.},
        author={ Weigert, S.},
   title={Mutually unbiased product bases for multiple qudits},
   journal={Journal of Mathematical Physics},
   volume={57},
   number={3},
   pages={032202},
   year={2016},
}

\bib{Pal}{article}{
   author={P\'al, K.F.},
   author={V\'ertesi, T.},
   title={Class of genuinely high-dimensionally entangled
states with a positive partial transpose.},
   journal={Phys. Rev. A},
   volume={100},
  pages={012310},
   year={2019},
}

\bib{Roy}{article}{
    title={Weighted complex projective 2-designs from bases: optimal state determination by orthogonal measurements},
  author={Roy, A.},
  author={Scott, A.J.},
  journal={J. Math. Phys.},
  volume={48},
  pages={072110},
  year={2007},
}

\bib{sanpera}{article}{
    title={Schmidt-number witnesses and bound entanglement},
  author={Sanpera, Anna},
  author={Bru\ss{}, Dagmar },
  author={Lewenstein, Maciej},
  journal={Phys. Rev. A},
  volume={63},
  number={5},
  pages={050301},
  year={2001},
  publisher={APS}
}

\bib{Scott}{article}{
    title={SIC-POVMs: A new computer study},
  author={Scott, A.J.},
  author={Grassl, M.},
  journal={J. Math. Phys.},
  volume={51},
  pages={042203},
  year={2010},
}

\bib{Shi}{article}{
    title={Bounds on the number of mutually unbiased entangled bases. Quantum Inf Process},
  author={Shi, F.},
  author={Shen, Y. },
  author={Chen, L.},
    author={Zhang, X.},
  journal={Quantum Inf Process},
  volume={383},
  number={19},
  year={2020},
  publisher={APS}
}

\bib{sindici}{article}{
  title = {Simple class of bound entangled states based on the properties of the antisymmetric subspace},
  author = {Sindici, Enrico},
  author={ Piani, Marco},
  journal = {Phys. Rev. A},
  volume = {97},
  pages = {032319},
  year = {2018}
}

\bib{Sperling2011}{article}{
   author={Sperling, J.},
   author={Vogel, W.},
   title={The Schmidt number as a universal entanglement measure},
   journal={Physica Scripta},
   volume={83},
   number={4},
   year={2011},
   pages={045002},
}

\bib{Terhal}{article}{
   author={Terhal, B. M.},
   author={Horodecki, P.},
   title={Schmidt number for density matrices},
   journal={Phys. Rev. A},
   volume={61},
   number={4},
   year={2000},
   pages={040301},
}

\bib{Weiner}{article}{
   author={Weiner, M.},
   title={A gap for the maximum number of mutually unbiased bases},
   journal={Proceedings of the American Mathematical Society},
   volume={141},
   number={6},
   year={2013},
   pages={1963-1969},
}

\bib{Wiesniak}{article}{
   author={Wie\'sniak, M.},
   author={ Paterek, T.},
   author={Zeilinger, A. },
   title={ Entanglement in mutually unbiased bases},
   journal={New Journal of Physics},
   volume={13},
   number={5},
   year={2011},
   pages={053047},
}

\bib{Werner}{article}{
   author={Werner, R. F.},
   title={All teleportation and dense coding schemes},
   journal={J. Phys. A: Math. Gen.},
   volume={34},
   year={2001},
   pages={7081},
}

\bib{wootters2}{article}{
   author={W. K. Wootters},
   title={A Wigner-function formulation of finite-state quantum mechanics},
   journal={Annals of Physics},
   volume={176},
   date={1987},
   pages={1}
}

\bib{wootters}{article}{
   author={W. K. Wootters},
   author={B. D. Fields},
   title={Optimal state-determination by mutually unbiased measurement},
   journal={Annals of Physics},
   volume={191},
   date={1989},
   pages={363-381}
}

\bib{XU}{article}{
 author={Xu, Dengming }, 
 title={Construction of mutually unbiased maximally entangled bases through permutations of hadamard matrices},
 journal={Quantum Information Processing}, 
 volume={16},  
 pages={11},
 year={2017},
}

\bib{XU2}{article}{
 author={Xu, Dengming},  
 title={Trace-2 excluded subsets of special linear groups over finite fields and mutually unbiased maximally entangled bases},
 journal={Quantum Information Processing}, 
 volume={18},  
 pages={213},
 year={2019},
}

\bib{Yang}{article}{
   author={Yang, Y.},
   author={Leung, D. H.},
   author={Tang, W. S.},
   title={All 2-positive linear maps from M3 (C) to M3 (C) are decomposable},
   journal={Linear Algebra and its Applications },
   volume={503},
   year={2016},
   pages={233-247},
}

\end{biblist}
\end{bibdiv}

\end{document}